\newtheorem{theorem}{Theorem}[section]
\newtheorem{lemma}[theorem]{Lemma}
\newtheorem{meta-theorem}[theorem]{Meta-Theorem}
\newtheorem{corollary}[theorem]{Corollary}
\newtheorem{definition}[theorem]{Definition}
\definecolor{darkgreen}{rgb}{0,0.5,0}
\definecolor{darkblue}{rgb}{0,0,0.6}
\Crefname{remark}{Remark}{Remarks}
\Crefname{observation}{Observation}{Observations}
\Crefname{Inequality}{Inequality}{Inequality}
\Crefname{Claim}{Claim}{Claim}
\Crefname{fact}{Fact}{Fact}
\algnewcommand\algorithmicswitch{\textbf{switch}}
\algnewcommand\algorithmiccase{\textbf{case}}
\newcommand{\eps}{\varepsilon}
\newcommand{\CONGEST}{$\mathsf{CONGEST}$\xspace}
\newcommand{\LOCAL}{$\mathsf{LOCAL}$\xspace}
\newcommand{\poly}{\operatorname{\text{{\rm poly}}}}
\newcommand{\set}[1]{\left\{#1\right\}}
\newcommand{\OPT}{\mathsf{OPT}}
\DeclareMathOperator{\E}{\mathbb{E}}
\newcommand{\Rp}{\mathbb{R}_{\geq 0}}
\newcommand{\calA}{\mathcal{A}}
\newcommand{\utility}{\mathbf{u}}
\newcommand{\cost}{\mathbf{c}}
\newcommand{\davg}{\bar{d}}
\newcommand{\FullOrShort}{full}
	\newcommand{\fullOnly}[1]{#1}
	\newcommand{\shortOnly}[1]{}
	\newcommand{\fullOnly}[1]{}
	\newcommand{\IncludePictures}[1]{}
\title{Faster CONGEST Approximation Algorithms\\ for Maximum Weighted Independent Set\\ in Sparse Graphs} 
	\author{
		Salwa Faour \\
		\small{University of Freiburg} \\
		\small{salwa.faour@cs.uni-freiburg.de}
		\and
		Fabian Kuhn \\
		\small{University of Freiburg} \\
		\small{kuhn@cs.uni-freiburg.de}
	}
\begin{document}
	
	\maketitle

	\begin{abstract}
    The maximum independent set problem is a classic optimization problem in graph theory that has also been studied quite intensively in the distributed setting. While the problem is hard to approximate within reasonable factors in general, there are good approximation algorithms known for several sparse graph families. In the present paper, we consider deterministic distributed \CONGEST algorithms for the weighted version of the problem in trees and graphs of bounded arboricity (i.e., hereditary sparse graphs). 

    For trees, we prove that the task of deterministically computing a $(1-\varepsilon)$-approximate solution to the maximum weight independent set (MWIS) problem has a tight $\Theta(\log^*(n) / \varepsilon)$ complexity. The lower bound already holds on unweighted oriented paths. On the upper bound side, we show that the bound can be achieved even in unrooted trees.

    For graphs $G=(V,E)$ of arboricity $\beta>1$, we give two algorithms. If the sum of all node weights is $w(V)$, we show that for any $\varepsilon>0$, an independent set of weight at least $(1-\varepsilon)\cdot \frac{w(V)}{4\beta}$ can be computed in $O(\log^2(\beta/\varepsilon)/\varepsilon + \log^* n)$ rounds. This result is obtained by a direct application of the local rounding framework of Faour, Ghaffari, Grunau, Kuhn, and Rozho\v{n} [SODA ‘23]. We further show that for any $\varepsilon>0$, an independent set of weight at least $(1-\varepsilon)\cdot\frac{w(V)}{2\beta+1}$ can be computed in $O(\log^3(\beta)\cdot\log(1/\varepsilon)/\varepsilon^2 \cdot\log n)$ rounds. For $\varepsilon=\omega(1/\sqrt{\beta})$, this significantly improves on a recent result of Gil [OPODIS ‘23], who showed that a $1/\lfloor(2+\varepsilon)\beta\rfloor$-approximation to the MWIS problem can be computed in $O(\beta\cdot\log n)$ rounds. As an intermediate step to our result, we design an algorithm to compute an independent set of total weight at least $(1-\varepsilon)\cdot\sum_{v\in V}\frac{w(v)}{\deg(v)+1}$ in time $O(\log^3(\Delta)\cdot\log(1/\varepsilon)/\varepsilon + \log^* n)$, where $\Delta$ is the maximum degree of the graph.
\end{abstract}

\setcounter{page}{0}
\thispagestyle{empty}

{   \newpage
    \smallskip
    \hypersetup{linkcolor=blue}
    \tableofcontents
    \setcounter{page}{0}
    \thispagestyle{empty}
}
\newpage




	\section{Introduction and Related Work}\label{sec:intro}

Given a graph $G=(V,E)$ with node weights $w:V\to \Rp$, the \emph{maximum weight independent set (MWIS)} problem asks for an independent set $I\subseteq V$ (i.e., a set $I$ of pairwise non-adjacent nodes) such that the total weight $w(I):=\sum_{v\in V} w(v)$ is maximized. The MWIS problem is a classic optimization problem on graphs, which has also been studied quite extensively in the distributed setting. In general graphs, MWIS is hard to approximate within any reasonable factor~\cite{Hastad96}. In the present paper, we therefore focus on sparse families of graphs for which much better MWIS approximations are known. Before discussing our contributions, we next summarize the existing literature on distributed approximation algorithms for MWIS.

In the standard setting for distributed graph algorithms, the graph $G=(V,E)$ on which we intend to solve some given graph problem (e.g., MWIS) is also the communication graph. There are $n=|V|$ nodes and we assume that each node is equipped with an $O(\log n)$-bit unique identifier. The nodes $V$ communicate over the edges in synchronous rounds. In the \LOCAL model~\cite{peleg00}, in each round, the nodes can exchange arbitrarily large messages over all the edges. In the more restricted \CONGEST model, in each round, each node can send a (possibly different) message of $O(\log n)$ bits to each neighbor. When considering \CONGEST algorithms for solving the MWIS problem, we further assume that a single node weight can be communicated with a single message. The internal computation at the nodes is not restricted. Initially, the nodes do not know anything about the topology of the graph $G$. For simplicity, we however assume that the nodes do know the value of $n$, of the maximum degree $\Delta$, and of other relevant parameters of $G$. At the end of an independent set algorithm, each node must know if it is in the independent set or not.

\subsection{Distributed MWIS Algorithms in General Graphs}

The most widely studied distributed independent set problem is the problem of computing a maximal independent set (MIS), i.e., an independent set that cannot be extended. The distributed complexity of computing an MIS has been studied intensively since the 1980s, e.g., \cite{alon86,Balliu0KO22-hideandseek,BEPS2016,BEK15,ghaffari16_MIS,Localrounding23,luby86,GhaffariGrunauFOCS24,kuhn16_jacm}. The current best randomized algorithms have complexity $\tilde{O}(\log^{5/3} \log n)$ in the \LOCAL and $O(\log\Delta) + \tilde{O}(\log^3\log n)$ in the \CONGEST model~\cite{ghaffari16_MIS,Localrounding23,GhaffariGrunauFOCS24}.\footnote{We use the notation $\tilde{O}(\cdot)$ to hide polylogarithmic factors in the argument, i.e., $\tilde{O}(x)=x\cdot\poly\log x$.} The best known deterministic MIS algorithms require $\tilde{O}(\log^{5/3} n)$ rounds in the \LOCAL and $\tilde{O}(\log^2\Delta\cdot\log n)$ rounds in the \CONGEST model~\cite{Localrounding23,GhaffariGrunauFOCS24}. For the \emph{maximum cardinality independent set (MCIS)} problem, i.e., for the unweighted version of the MWIS problem, an MIS directly gives a $1/\Delta$-approximation (where $\Delta$ denotes the maximum degree of the graph). In \cite{Bar-YehudaCGS17}, it is shown that in the \CONGEST model, a $1/\Delta$-approximation for the MWIS problem can be computed in time $O(\log W\cdot T_{\mathrm{MIS}})$, where $W$ denotes the ratio between the largest and smallest node weight and $T_{\mathrm{MIS}}$ is the time to compute an MIS. Subsequently, Kawarabayashi, Khoury, Schild, and Schwartzman~\cite{KawarabayashiKS20} give an algorithm to compute an independent set of weight at least $(1-\eps)\cdot \frac{w(V)}{\Delta+1}$ in time $O(1/\eps)$ time the time to compute an independent set of weight at least $\frac{w(V)}{c(\Delta+1)}$ for some constant $c>0$. For the latter problem, they give two algorithms. One of them reduces the problem to computing a single MIS of a (locally computable) subgraph of the input graph. The other one is a randomized algorithm that requires $\poly(\log\log n)$ rounds. The reduction from computing an independent set of weight $(1-\eps)\cdot \frac{w(V)}{\Delta+1}$ to the problem of computing an independent set of weight $\frac{w(V)}{c(\Delta+1)}$ for some constant $c>0$ is done by using the local-ratio technique of \cite{localratio}. In \cite{Localrounding23}, the algorithm of \cite{KawarabayashiKS20} is adapted in two ways. First, it is shown that it suffices to run $O(\log(1/\eps))$ instead of $O(1/\eps)$ instances of computing an independent set of weight $\frac{w(V)}{c(\Delta+1)}$. Further, the paper gives a deterministic $O(\log^2\Delta + \log^* n)$-time \CONGEST algorithm to compute an independent set of total weight at least $\frac{w(V)}{4(\Delta+1)}$. The algorithm is based on a local rounding technique that is introduced in \cite{Localrounding23}.

For general graphs, better approximations can in particular be achieved in the \LOCAL model. One can decompose the graph into clusters of diameter $O(\log(n)/\eps)$ such that at most an $\eps$-fraction of the total weight is outside clusters~\cite{linial93,MPX13,polylogdecomp}. In the \LOCAL model, one can then compute an optimal independent set in each cluster in $O(\log(n)/\eps)$ rounds. In this way, one can obtain an MWIS solution with expected approximation ratio $1-\eps$ in time $O(\log(n)/\eps)$~\cite{podc16_BA,MPX13} and one can obtain a $(1-\eps)$-approximation w.h.p.\ in time $O(\log(n)\cdot\log^3(1/\eps)/\eps)$~\cite{ChangL23}. In time $\poly(\log n)/\eps$, a $(1-\eps)$-approximation for MWIS and also more general covering and packing problems can also be computed deterministically in the \LOCAL model~\cite{polylogdecomp,opodis20_MVC,ghaffari2017complexity,derandomization,GhaffariGrunauFOCS24}.

\subsection{Graphs of Small Neighborhood Independence}

The neighborhood independence $\theta$ of a graph is the maximum number of pairwise non-adjacent neighbors of any node of the graph. In a graph with neighborhood independence $\theta$, an MIS directly gives a $1/\theta$-approximation for the MCIS problem (see, e.g., \cite{Hochbaum83}).\footnote{To see this, one can assign each node $v$ of some optimal independent set to a neighboring node in a given MIS (or to itself if $v$ is in the given MIS). The number of nodes of the optimal independent set that are assigned to each node in the MIS is then at most $\theta$.} In weighted graphs of neighborhood independence $\theta$, there is a deterministic \CONGEST algorithm for computing a $(1-\eps)/\theta$-approximation for MWIS in $O(\log^2(\Delta W)\cdot\log(1/\eps)+\log^* n)$ rounds~\cite{Localrounding23}.
As a special case of this, we obtain a $(1-\eps)/2$-approximation to the weighted maximum matching problem on graphs, and more generally a $(1-\eps)/r$-approximation for weighted maximum matching on hypergraphs of rank $r$. In addition, some classes of geometric graphs that are used to model wireless networks, such as unit disk graphs, also have neighborhood independence $O(1)$ and we therefore also get constant approximations for MWIS for those graphs. We note that in those geometric graphs, an MIS can be computed in only $O(\log^* n)$ rounds~\cite{schneider08}.

The distributed maximum (weighted) matching problem has also been studied by itself quite intensively, e.g.,~\cite{ahmadi18,Bar-YehudaCGS17,FaourFK21,Fischer20,FischerMU22,lotker15,kuhn16_jacm}. In the \CONGEST model, a $(2/3-\eps)$-approximation in general graphs and a $(1-\eps)$-approximation in bipartite graphs can be computed deterministically in $O\big(\frac{\log (\Delta W)}{\eps^2} + \frac{\log^2\Delta + \log^* n}{\eps}\big)$ rounds, where $W$ denotes the ratio between largest and smallest node weight~\cite{ahmadi18}.  It is further known that in general graphs, a $(1-\eps)$-approximation to the maximum cardinality matching problem can be computed in $\poly(\log(n)/\eps)$ (randomized) \CONGEST rounds~\cite{FischerMU22}.

    \subsection{Distributed MWIS Algorithms for Sparse Graphs}\label{sec:intro_sparse}

We now get to the case of sparse graphs, which are the focus of this paper. It has been shown by Tur\'an~\cite{turan41} that every graph of average degree $\davg$ has an independent set of size at least $n/(\davg+1)$. A natural generalization of this is the so-called Caro-Wei bound~\cite{caro79,Wei1981generalizedturan}, which states that every graph $G$ has an independent set of size
\begin{equation}\label{eq:carowei}
    \alpha(G) \geq \mathsf{CaroWei}(G) := \sum_{v\in V}\frac{1}{\deg(v)+1} \geq \frac{n}{\davg+1}. 
\end{equation}
The second inequality follows from an application of the Cauchy-Schwarz inequality. As observed by Boppana, the Caro-Wei bound can be obtained in expectation by a simple random process that can be implemented in a single round in the \CONGEST model~\cite{BoppanaHR18,HalldorssonKonrad18}. Every node picks an independent random number from a sufficiently large domain and a node joins the independent set if and only if it picked a smaller number than all its neighbors. This algorithm is equivalent to running a single phase of Luby's classic MIS algorithm~\cite{luby86}. In \cite{BoppanaHR18}, it is shown that the Caro-Wei bound \eqref{eq:carowei} is a $\frac{4\cdot(\sqrt{2}-1)}{\davg+2}\approx \frac{1.657}{\davg+2}$-approximation of the MCIS problem. For sparse graphs, we often want to express the approximation factor as a function of the arboricity $\beta$ of the graph. The arboricity of a graph is defined as the number of trees that are needed to cover all the edges of the graph. Since for the arboricity $\beta$, we always have $\beta\geq 2\davg$, the result of \cite{BoppanaHR18} implies that \eqref{eq:carowei} is a $\frac{2\cdot(\sqrt{2}-1)}{\beta+1}\approx \frac{0.828}{\beta+1}$-approximation of the MCIS problem. In \cite{Localrounding23}, it is shown that there is a deterministic \CONGEST algorithm with round complexity $O\big(\frac{\log^2(\Delta/\eps)}{\eps} + \log^* n\big)$ that comes within a factor $(1/2-\eps)$ of the bound of \eqref{eq:carowei}, as well as of a natural generalization of \eqref{eq:carowei} in the weighted case. As one of the technical results of the present paper, we show that one can improve the $(1/2-\eps)$ factor to a $(1-\eps)$ factor at the cost of an additional $\log(\Delta)/\eps$ factor in the round complexity (cf.~\Cref{thm:carowei}).

The bound \eqref{eq:carowei} and also its weighted generalizations do not allow to approximate the weighted independent set problem within a factor linear in the arboricity $\beta$. The best that the authors of \cite{BoppanaHR18} achieve by taking a weighted analogue of \Cref{eq:carowei} is a $\frac{3/\sqrt[3]{4}}{\Delta+1}\approx \frac{1.89}{\Delta+1}$-approximation of the MWIS problem. The first paper that achieves an efficient $1/O(\beta)$-approximation for the MWIS problem in the \CONGEST model is \cite{KawarabayashiKS20}. They achieve a $(1-\eps)/(8\beta)$-approximation by first decomposing the graph into $O(\log n)$ layers of degree at most $4\beta-1$ and by running an algorithm to obtain an independent set of weight $(1-\eps)\cdot w_i(V_i)/(4\beta)$ within each layer $V_i$ (and where $w_i$ is an appropriate weight function that is used in layer $V_i$). The overall round complexity is $O(\log n)$ times the time for computing the independent set of weight $(1-\eps)\cdot w_i(V_i)/(4\beta)$ in each layer. By using the randomized algorithm of \cite{KawarabayashiKS20}, the overall complexity of the resulting randomized $(1-\eps)/(8\beta)$-approximation for the MWIS problem is $O\big(\frac{\log n\cdot \poly\log\log n}{\eps}\big)$ and by using the deterministic algorithm of \cite{Localrounding23}, the overall complexity of the resulting deterministic $(1-\eps)/(8\beta)$-approximation for the MWIS problem is $O\big((\log^2\beta\cdot \log(1/\eps)\cdot \log n\big)$. In fact, at the cost of an additional $1/\eps$-factor in the round complexity, the approximation quality of the algorithm of \cite{KawarabayashiKS20} can be improved to $(1-\eps)/(4\beta)$. Those results were then improved by Gil in \cite{YuvalGil24}, who in particular provides a deterministic \CONGEST algorithm to compute a $1/\lfloor (2+\eps)\beta\rfloor$-approximation in time $O(\beta\cdot \log (n)/\eps)$. Gil manages to improve the approximation quality by a factor $2$ by using a clever LP-based algorithm to iterate over the layers.

More special classes of sparse graphs are trees, i.e., graphs of arboricity $1$ and minor-closes families of graphs (e.g., planar graphs). In \cite{lenzen08,CzygrinowHW08}, it was shown that on cycles (and paths), deterministically computing a constant approximation for the maximum cardinality independent set problem requires $\Omega(\log^* n)$ rounds. Note that in all graphs of arboricity $\beta=O(1)$, computing a constant MCIS approximation is trivial. Half the nodes have degree at most $4\beta$ and an independent set of size $\Omega(n)$ among the nodes of degree at most $4\beta$ can be computed by computing an MIS in time $O(\beta+\log^* n)$~\cite{BEK15}. For planar graphs (and thus also for trees), it is shown in \cite{CzygrinowHW08} that in the \LOCAL model, even for the MWIS problem, a $(1-\eps)$-approximation can be computed in time $\poly(\eps^{-1}\cdot\log^* n)$.
	\subsection{Our Contributions}
\label{sec:contributions}

In our paper, we focus on deterministic approximation algorithms for the MWIS problem in the \CONGEST model in sparse families of graphs. In the following, we list our technical contributions in detail and we also give an overview over the most important ideas that are needed to prove the stated theorems.

\subsubsection{Approximating MWIS in Trees}
\label{sec:contribTrees}
We start by establishing a tight bound for computing a $(1-\eps)$-approximation for MWIS in trees.

\begin{theorem}\label{thm:mainTree}
    Let $\eps>\log(n)/n$ be a parameter. The deterministic \CONGEST model complexity of computing a $(1-\eps)$-approximate solution for MWIS in tree networks is $\Theta\big(\frac{\log^* n}{\eps}\big)$. The upper bound holds in general unrooted trees. The lower bound even holds for the unweighted version of the problem in oriented paths and in the \LOCAL model.
\end{theorem}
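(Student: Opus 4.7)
The plan is to argue the matching upper and lower bounds separately; both reflect the multiplicative coupling between the $O(\log^* n)$ cost of symmetry breaking on a path/tree and the $O(1/\eps)$ cost of tolerating only an $\eps$-fraction of boundary-weight loss.

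\emph{Upper bound (trees in \CONGEST).} I would reduce the problem to solving MWIS optimally on subtrees of radius $R=\Theta(1/\eps)$, plus a careful reconciliation at cluster boundaries. First, compute a proper $3$-coloring of the tree via Cole--Vishkin in $O(\log^* n)$ rounds. Use this $3$-coloring to construct a $(1,R)$-ruling set $S$ by iteratively suppressing candidates over $\Theta(R)$ phases, each phase leveraging the $3$-coloring to break ties; the total cost works out to $O(\log^*(n)/\eps)$ rounds. Cluster every node to its nearest ruling-set member, so clusters are subtrees of radius at most $R$. Each cluster center then aggregates the topology and weights of its cluster along the tree in $O(1/\eps)$ CONGEST rounds (pipelined BFS), solves MWIS exactly via tree DP locally, and broadcasts decisions back.

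The delicate step is reconciling cluster boundaries to produce a valid global IS while losing only $\eps\cdot w(V)$ of weight. Within each cluster $C_s$ of radius $R$, I would examine the $R/2$ concentric shells at distances $R/2,R/2+1,\ldots,R$ from the center; their combined weight is at most $w(C_s)$, so by averaging some shell has weight at most $\frac{2}{R}\cdot w(C_s)=O(\eps)\cdot w(C_s)$. Drop nodes in this minimum-weight shell and keep the optimal MWIS on the interior; the dropped shell cleanly separates the interior solutions across adjacent clusters, so their union is a valid IS. Summing the dropped weights yields total loss $O(\eps)\cdot w(V)$, and $(1-\eps)$-approximation follows after rescaling $\eps$ by a constant.

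\emph{Lower bound (oriented paths in \LOCAL, unweighted).} The plan is to strengthen Linial's $\Omega(\log^* n)$ MIS lower bound into an approximation-density statement. The key lemma to establish is: for every $r$-round deterministic \LOCAL algorithm $A$ on oriented paths of length $n$, there is an ID assignment on which $A$'s output IS has density at most $\tfrac{1}{2}\bigl(1-\Omega(\log^*(n)/r)\bigr)$. Given this, the inequality $(1-\eps)/2\leq\tfrac{1}{2}\bigl(1-c\log^*(n)/r\bigr)$ rearranges to $r=\Omega(\log^*(n)/\eps)$. To prove the lemma, I would follow the Naor--Stockmeyer/Linial round-elimination framework: an $r$-round algorithm is a function $f\colon[N]^{2r+1}\to\{0,1\}$ whose ``$1$''-outputs must form an IS. A Ramsey-style pigeonhole at fineness $\log^*(n)/r$ exhibits arbitrarily long ID-subsequences on which $f$ is forced into a periodic pattern, and any IS on such a periodic segment suffers a density deficit of $\Omega(1)$ per period of length $O(r/\log^* n)$. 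Aligning such periodic segments across the full path via an adversarial ID assignment accumulates a loss of $\Omega(n\log^*(n)/r)$ in the output IS, contradicting the $(1-\eps)$-approximation guarantee whenever $r=o(\log^*(n)/\eps)$.

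\emph{Main obstacles.} The upper bound's main obstacle is the weight-aware shell selection: a naive fixed-radius boundary may contain nearly all the cluster's weight, and the averaging over $R/2$ distinct shells is what keeps the total loss at $O(\eps)\cdot w(V)$. The lower bound's main obstacle is sharpening the classical ``$r<c\log^* n$ fails'' conclusion of Linial into the finer density-deficit bound $\Omega(\log^*(n)/r)$; the round-elimination analysis must be executed at a finer granularity so that the loss scales inversely with $r$ rather than degenerating into a binary success/failure statement. Once this density deficit is in hand, the product lower bound follows by a direct calculation against the $(1-\eps)$-approximation slack.
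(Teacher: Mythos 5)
Both halves of your proposal have genuine gaps.

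\emph{Upper bound.} The boundary reconciliation does not work as described. Your clusters are Voronoi cells around a ruling set, so an intercluster edge may join two nodes lying at arbitrary distances from their respective centers; dropping a light BFS shell at distance in $[R/2,R]$ from each center therefore does not separate the interior solutions of adjacent clusters (and if you instead keep only the nodes strictly inside the dropped shell, you discard the outer part of each cluster, whose weight is uncontrolled). More fundamentally, the loss has to be charged to the intercluster edges, each costing $\min\{w(u),w(v)\}$, and a ruling-set clustering gives no control over where those edges fall relative to the weights: on a path with weights $1,\dots,1,M,M,1,\dots$ a cluster boundary landing on the $M$--$M$ edge costs $M$, which can dominate the entire cluster. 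The paper's upper bound is organized around exactly this difficulty: it alternates a \emph{merging} step (each cluster marks its heaviest incident intercluster edge; a constant fraction of the marked weight is absorbed into clusters, so the total intercluster weight drops by a constant factor per iteration, at the price of a constant-factor diameter increase) with a \emph{splitting} step that is essentially your lightest-shell averaging, used only to bring the diameter back down to $O(1/\eps)$. Your proposal is missing the merging step, which is the mechanism that makes the clustering weight-aware. (Also, the loss bound $O(\eps)\cdot w(V)$ only yields a $(1-O(\eps))$-approximation because trees are $2$-colorable and hence $w(I^*)\geq w(V)/2$; this needs to be stated.)

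\emph{Lower bound.} Your key lemma --- every $r$-round deterministic algorithm on oriented paths suffers a density deficit of $\Omega(\log^*(n)/r)$ --- is essentially a restatement of the theorem's lower bound, and the proof sketch (Ramsey-type pigeonhole ``at fineness $\log^*(n)/r$'' forcing periodic behavior with period $O(r/\log^* n)$) is not a known statement; making the round-elimination argument quantitative in $r$ is precisely the hard part, which you acknowledge but do not resolve. The paper sidesteps this entirely with a black-box reduction: take an oriented path $P$ on $N=\Theta(\eps n)$ nodes, subdivide each edge into a directed path of length $2k+1=\Theta(1/\eps)$, run the hypothetical $T(n)$-round $(1-\eps)$-approximation on the subdivided $n$-node path (simulable on $P$ in $O(\eps\cdot T(n)+1)$ rounds in the \LOCAL model), locally repair the output so that no two original nodes are adjacent in it, and observe that since the virtual nodes contribute at most $(n-N+1)/2$ to the solution while the solution has size at least $(1-\eps)n/2$, at least $\Omega(N)$ selected nodes are original nodes. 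This is a constant-factor MCIS approximation on $P$, so $\eps\cdot T(n)+1=\Omega(\log^* N)$, and the hypothesis $\eps>\log(n)/n$ ensures $\log^* N=\Theta(\log^* n)$. You would need either this reduction or a fully worked-out quantitative round elimination to establish your density-deficit lemma.
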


The condition $\eps>\log(n)/n$ is a technical condition that we use in the lower bound. The $\log n$ term could be replaced by any other function $f(n)$ for which $\log^*(f(n))=\Theta(\log^* n)$. The upper bound is based on ideas that were developed by Czygrinow, Ha{\'{n}}{\'{c}}kowiak, and  Wawrzyniak for planar graphs in~\cite{CzygrinowHW08}. The core idea of the algorithm of \cite{CzygrinowHW08}  is the following. First, one defines a weight function on edges, where for $\set{u,v}\in E$, we set $w(\set{u,v}):=\min\set{w(u), w(v)}$. The algorithm of \cite{CzygrinowHW08} then computes a clustering of the nodes of $V$ such that the total weight of the edges connecting nodes in different clusters is at most an $\eps/2$-fraction of the total weight of all the edges. By computing an optimal weighted independent set within each cluster, one can then obtain a $(1-\eps)$-approximate MWIS solution of the given instance. The independent sets of the clusters are combined by taking the union and removing the smaller weight node of every intercluster edge. The clustering algorithm of \cite{CzygrinowHW08} obtains clusters of diameter $\poly(\eps^{-1})$ in time $\poly(\eps^{-1})\cdot\log^* n$ in the \LOCAL model. In trees, the clustering algorithm of \cite{CzygrinowHW08} can be implemented in the \CONGEST model. Further, in trees, an optimal solution to MWIS inside each cluster can be computed in time linear in the diameter of the cluster by using a straightforward dynamic programming algorithm. The core challenge for proving the upper bound of \Cref{thm:mainTree} is to obtain a clustering algorithm in which the maximum cluster diameter is only $O(1/\eps)$. The algorithm of \cite{CzygrinowHW08} consists of basic steps in which the cluster diameter grows by constant factor and the total weight of the intercluster edges shrinks by a constant factor. The cluster diameter however grows by a factor that is larger than the factor by which the total weight of the intercluster edges decreases. In our algorithm, we show that in trees, one can interleave those basic steps of \cite{CzygrinowHW08} with steps in which the clusters are split into clusters of smaller diameter, without increasing the overall weight of the intercluster edges by too much.

The lower bound of \Cref{thm:mainTree} is obtained as follows. In \cite{CzygrinowHW08,lenzen08}, it is shown that deterministically obtaining any constant approximation for maximum cardinality independent set on unoriented rings (i.e., for the unweighted version of MWIS) requires $\Omega(\log^* n)$ rounds. It is not hard to obtain the same lower bound also for oriented paths. Our lower bound is then obtained by a reduction from the problem of computing a constant approximation on oriented paths. For the reduction, each edge of a path is subdivided to a path of length $2k+1$, were $k=c/\eps$ for a sufficiently small constant $c$. One can then show that any $(1-\eps)$-approximation of maximum cardinality independent set on this virtual path can be transformed into a constant approximation of the problem on the original path.

\subsubsection{Approximating MWIS as a Function of the Arboricity}
\label{sec:contribArboricity}

For the remainder of this section, we assume that we are given a graph $G=(V,E)$ of arboricity $\beta$. In \cite{KawarabayashiKS20}, it is shown that there is a deterministic \CONGEST algorithm to compute an independent set of weight at least $\frac{w(V)}{(4+\eps)\dot \beta}$ in randomized time $O\big(\frac{\log n \cdot\poly\log\log n}{\eps}\big)$ and in combination with an algorithm from \cite{Localrounding23} in deterministic time $O\big(\frac{\log^2\beta\cdot\log(1/\eps)\cdot\log n}{\eps}\big)$. In the following, we show that for moderately small $\beta$ both those bounds can be improved significantly.

\begin{restatable}{theorem}{restatefirst}\label{thm:arboricity_rounding}
  Let $G=(V,E)$ be a graph with arboricity $\beta\geq 1$ and node weight function $w:V\to \Rp$. For any $\eps>0$, there is a $O\big(\frac{\log^2(\beta/\eps)}{\eps}+\log^* n\big)$-round deterministic \CONGEST algorithm to compute an independent set $I$ of $G$ of weight
  $
    w(I) \geq (1-\eps)\cdot \frac{w(V)}{4\cdot \beta}$.
\end{restatable}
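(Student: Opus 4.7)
The plan is to invoke the deterministic local rounding framework of Faour et al.\ \cite{Localrounding23} as a black box on the natural LP relaxation of MWIS, with a uniform fractional starting solution whose value scales as $w(V)/\beta$. The framework, given a fractional packing-LP solution and a local-density parameter $\rho$, deterministically rounds it to an integer feasible solution in $O(\log^2(\rho/\eps)/\eps + \log^* n)$ rounds while preserving a specified fraction of the starting value. The key observation for our setting is that on arboricity-$\beta$ graphs, $\rho$ can be taken to be $O(\beta)$ rather than $O(\Delta)$.

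Concretely, consider the standard edge-constraint LP
\[
\max \sum_{v \in V} w(v)\, x_v \quad \text{s.t.}\quad x_u + x_v \leq 1 \ \forall \{u,v\} \in E,\ x \in [0,1]^V.
\]
Since $\beta \geq 1$, the uniform assignment $x_v = 1/(2\beta)$ is LP-feasible (along each edge, $x_u + x_v = 1/\beta \leq 1$) with objective value $w(V)/(2\beta)$. Applying the framework of \cite{Localrounding23} to this fractional starting point produces a $\{0,1\}$-valued feasible vector $\tilde x$ --- the indicator of an independent set $I$ --- of weight at least $(1-\eps)/2$ times the starting LP value, giving
\[
w(I) \;\geq\; \frac{1-\eps}{2}\cdot\frac{w(V)}{2\beta} \;=\; (1-\eps)\cdot\frac{w(V)}{4\beta}.
\]
The factor $(1-\eps)/2$ is the MWIS-type rounding guarantee provided by the framework: each edge constraint must be enforced integrally, so at most one of its two fractional endpoints can survive.

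For the round complexity, the density $\rho$ is the maximum number of LP constraints that any single variable participates in per rounding iteration. On an arboricity-$\beta$ graph we can take $\rho = O(\beta)$ by first computing a deterministic low-out-degree orientation (each node has out-degree at most $(1+o(1))\beta$), which can be produced within the round budget using standard distributed tools, and then charging each edge constraint to its tail under this orientation. Plugging $\rho = O(\beta)$ into the framework's complexity bound immediately yields the claimed $O(\log^2(\beta/\eps)/\eps + \log^* n)$-round bound.

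The principal technical obstacle is to verify that the framework of \cite{Localrounding23} does admit this arboricity-based reparameterization: both the $\log^2$-type complexity and the $(1-\eps)/2$ rounding guarantee should scale with $\beta$ rather than with $\Delta$, which can be much larger in a sparse graph. Both of these rest on the local nature of the rounding --- each node interacts only with its $O(\beta)$ owned constraints per iteration --- so once the low-out-degree orientation is in place, the framework's guarantees transfer essentially unchanged. This is the sense in which the theorem is a ``direct application'' of \cite{Localrounding23}: no new combinatorial or rounding machinery is required beyond the starting fractional solution and the oriented view of the constraints.
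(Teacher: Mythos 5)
Your high-level plan coincides with the paper's: start from the uniform fractional solution $x_v=1/(2\beta)$, feed it to the local rounding framework of \cite{Localrounding23} as a black box, and accept a factor-$2$ loss to land at $(1-\eps)\cdot w(V)/(4\beta)$. However, there are two concrete gaps in how you invoke the framework.

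First, the guarantee of Lemma~4.2 of \cite{Localrounding23} is not ``$(1-\eps)/2$ times the LP value because at most one endpoint of each edge survives.'' The framework outputs an independent set of weight $(1-\eps)\cdot\big(\utility(\vec{x})-\cost(\vec{x})\big)$, where $\cost(\vec{x})=\sum_{\{u,v\}\in E}x_ux_v\min\{w(u),w(v)\}$, and it requires the precondition $\cost(\vec{x})\le \tfrac12\utility(\vec{x})$. Verifying this precondition is the one place where the arboricity actually enters, and your write-up skips it. The needed estimate is $\sum_{\{u,v\}\in E}\min\{w(u),w(v)\}\le \beta\cdot w(V)$, which follows because a graph of arboricity $\beta$ \emph{admits} an orientation with outdegree at most $\beta$, so each edge can be charged $\min\{w(u),w(v)\}\le w(v)$ to its tail $v$. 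With $x_v=1/(2\beta)$ this gives $\cost\le w(V)/(4\beta)=\tfrac12\utility$, and then $(1-\eps)(\utility-\cost)\ge(1-\eps)w(V)/(4\beta)$.

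Second, your round-complexity accounting is based on a misreading of the framework's parameter. The complexity $O(\log^2(K/\eps)/\eps+\log^* n)$ is governed by $K$, the inverse of the minimum nonzero fractional value (here $K=2\beta$ immediately), not by a ``constraint density $\rho$'' that must be realized algorithmically. Consequently, no orientation ever needs to be computed: it is a purely existential device in the cost bound. Your plan to deterministically compute a $(1+o(1))\beta$-outdegree orientation ``within the round budget'' is both unnecessary and unsupported --- known deterministic \CONGEST constructions of such orientations (e.g., via $H$-partitions) take $\Omega(\log n)$ rounds, which would exceed the claimed $O(\log^2(\beta/\eps)/\eps+\log^* n)$ bound whenever $\beta$ is small. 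Removing that step and replacing your heuristic ``one endpoint per edge'' argument with the explicit utility/cost verification yields the paper's proof.
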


The theorem can be proven by a relatively straightforward blackbox application of the local rounding framework of \cite{Localrounding23}. More concretely, we show that \Cref{thm:arboricity_rounding} follows almost directly from Lemma 4.2 in \cite{Localrounding23}. \Cref{thm:arboricity_rounding} implies a $\frac{1-\eps}{4\beta}$-approximation of MWIS. In \cite{YuvalGil24}, it is shown that a $\frac{1}{\lfloor(2+\eps)\beta\rfloor}$-approximation can be computed deterministically in $O\big(\frac{\beta\cdot\log n}{\eps}\big)$ \CONGEST rounds. For $\eps=1/o(\sqrt{\beta})$, we significantly improve on this bound as stated by the following theorem.

\begin{theorem}\label{thm:arboricity}
  Let $G=(V,E)$ be a graph with arboricity $\beta\geq 1$ and node weight function $w:V\to \Rp$. For any $\eps>0$, there is a $O\big(\frac{\log^3(\beta)\cdot\log(1/\eps)\cdot\log n}{\eps^2}\big)$-round deterministic \CONGEST algorithm to compute an independent set $I$ of $G$ of weight
    $w(I) \geq (1-\eps)\cdot \frac{w(V)}{2\beta+1}$.
\end{theorem}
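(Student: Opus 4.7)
The plan is to combine the Caro-Wei approximation algorithm of \Cref{thm:carowei} with a standard low-degree layered decomposition, plugged into a local-ratio / LP-style layered iteration in the spirit of \cite{KawarabayashiKS20,YuvalGil24}. Since $G$ has arboricity $\beta$, every subgraph has average degree at most $2\beta$; hence, by a simple Markov argument, at least an $\eps/(2+\eps)$-fraction of the vertices of any subgraph have degree at most $(2+\eps)\beta$. Iteratively peeling off such low-degree vertices gives, in $O(\log(n)/\eps)$ deterministic \CONGEST rounds, an $H$-partition $V = V_1 \sqcup \cdots \sqcup V_\ell$ with $\ell = O(\log(n)/\eps)$ layers such that every $v \in V_i$ has at most $(2+\eps)\beta$ neighbors in $V_i \cup V_{i+1} \cup \cdots \cup V_\ell$. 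In particular, each induced subgraph $G[V_i]$ has maximum degree at most $(2+\eps)\beta$, so invoking \Cref{thm:carowei} on $G[V_i]$ with $\Delta=O(\beta)$ costs $O\big(\log^3(\beta)\cdot \log(1/\eps)/\eps + \log^* n\big)$ rounds per layer.

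I would process the layers sequentially from the deepest $V_\ell$ to the shallowest $V_1$, maintaining a global IS $I$. When reaching layer $V_i$, let $B_i := N(I) \cap V_i$ be the set of vertices in $V_i$ blocked by IS members already chosen in deeper layers, and consider the unblocked subgraph $G[V_i \setminus B_i]$, whose maximum degree is still at most $(2+\eps)\beta$. On this subgraph I would invoke \Cref{thm:carowei} with residual weights $\tilde{w}_i$ that combine $w(v)$ with a suitable share of the weight of $v$'s blocked in-layer neighbors, following the local-ratio redistribution implicit in Gil's LP-based layered argument. The resulting IS $I_i$ is added to $I$, which by construction remains an IS in $G$.

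For the analysis, the guarantee of \Cref{thm:carowei} gives, inside each layer,
\[
    \tilde{w}_i(I_i) \;\geq\; (1-\eps)\sum_{v \in V_i \setminus B_i}\frac{\tilde{w}_i(v)}{\deg_{V_i \setminus B_i}(v)+1} \;\geq\; (1-\eps)\cdot\frac{\tilde{w}_i(V_i \setminus B_i)}{(2+\eps)\beta + 1}.
\]
The construction of $\tilde{w}_i$ is designed so that summing these per-layer inequalities telescopes to $w(I) \geq (1-O(\eps))\cdot w(V)/((2+\eps)\beta+1)$; after rescaling $\eps$ by a constant factor and using $(2+\eps)\beta+1 \leq (2\beta+1)(1+\Theta(\eps))$, this becomes $w(I) \geq (1-\eps)\cdot w(V)/(2\beta+1)$. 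The round complexity is $O(\log(n)/\eps)$ for the decomposition plus $\ell = O(\log(n)/\eps)$ sequential calls to \Cref{thm:carowei} of cost $O(\log^3(\beta)\log(1/\eps)/\eps)$ each, summing to the claimed $O(\log^3(\beta)\cdot\log(1/\eps)\cdot\log(n)/\eps^2)$ rounds.

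The main obstacle I expect is calibrating the residual weights $\tilde{w}_i$ so that (a) they can be computed and exchanged using only $O(\log n)$-bit messages in \CONGEST, and (b) the cumulative effect of $\ell$ multiplicative $(1-\eps)$ slacks from the per-layer invocations of \Cref{thm:carowei}, together with the local-ratio charging, still yields a denominator of $(2+\eps)\beta + 1$ rather than the naive $2(2+\eps)\beta + 1$ one would obtain by charging both endpoints of each cross-layer conflict against \Cref{thm:carowei}'s slack. The factor-$2$ improvement in the denominator is precisely what Gil's LP duality provides in his slower $O(\beta\cdot\log n)$-round algorithm; translating that duality into a local, deterministic charging scheme compatible with \Cref{thm:carowei} is the heart of the proof. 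Once this invariant is verified, the theorem follows by a straightforward composition of \Cref{thm:carowei} with the peeling decomposition.
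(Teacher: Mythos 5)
Your setup (peeling decomposition into $O(\log(n)/\eps)$ layers of induced degree at most $(2+\eps)\beta$, one call to \Cref{thm:carowei} per layer, local-ratio composition, and the resulting round complexity) matches the paper, but the proposal has a genuine gap exactly where you flag the ``main obstacle'': you never specify the weight calibration that gets the denominator down to $(2+\eps)\beta+1$ instead of the naive $2(2+\eps)\beta+1$, and that calibration is the entire content of the proof. Moreover, your processing order is backwards in a way that makes the charging harder: in the $H$-decomposition a node of the deepest layer $V_\ell$ has bounded degree only towards $V_\ell$ itself and may have arbitrarily many neighbors in shallower layers, so greedily committing to deep-layer IS nodes first (and merely marking shallower neighbors as ``blocked'') can destroy an unbounded amount of weight with no way to charge it. The paper processes the \emph{low-degree} layer $V_0$ first in the local-ratio sense: $I_0\subseteq V_0$ is computed first, the weights on $V\setminus V_0$ are reduced via \Cref{lemma:localratio}, the recursion on $G[V\setminus V_0]$ comes second, and its output takes priority in the final combination $I:=I'\cup(I_0\setminus N^+(I'))$. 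This order is what makes the cross-layer loss controllable, since each $v\in V_0$ has total degree at most $A=\lfloor(2+\eps)\beta\rfloor$ and hence at most $A-\deg_0(v)$ neighbors outside $V_0$.

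The missing idea is the choice of weights fed to \Cref{thm:carowei} on the low-degree layer: the paper runs it on $G[V_0]$ with $\overline{w}(v):=w(v)\cdot(\deg_0(v)+1)$, for which the weighted Caro--Wei bound is exactly $\sum_{v\in V_0}\overline{w}(v)/(\deg_0(v)+1)=w(V_0)$. The returned $I_0$ therefore satisfies $\sum_{v\in I_0}(\deg_0(v)+1)\,w(v)\geq(1-\eps')\,w(V_0)$, i.e.\ $I_0$ ``dominates'' essentially all of $w(V_0)$ through closed in-layer neighborhoods. Combined with the bound $w'(V\setminus V_0)\geq w(V\setminus V_0)-\sum_{v\in I_0}(A-\deg_0(v))\,w(v)$ on the local-ratio loss, a short computation (\Cref{lemma:approxRatio_arboricity}) shows the gain $w(I_0)$ plus the domination property exactly offsets that loss, yielding $w(I)\geq(1-\eps')\,w(V)/(A+1)$ by induction on the recursion, with no per-layer doubling and no accumulation of $(1-\eps)$ factors across the $O(\log(n)/\eps)$ levels. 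Without this (or an equivalent deterministic charging scheme replacing Gil's LP duality), your telescoping claim ``the construction of $\tilde{w}_i$ is designed so that\dots'' is an assertion of the theorem rather than a proof of it.
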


A major step towards proving \Cref{thm:arboricity} is an efficient algorithm to compute an independent set of weight arbitrarily close to a natural weighted generalization of the Caro-Wei bound \eqref{eq:carowei}. Specifically, we prove the following theorem.

\begin{theorem}\label{thm:carowei}
  Let $G=(V,E)$ be a graph with maximum degree $\Delta\geq 1$ and node weight function $w:V\to \Rp$. For any $\eps>0$, there is a $O\big(\frac{\log^3(\Delta)\cdot\log(1/\eps)}{\eps} +\log^* n \big)$-round deterministic \CONGEST algorithm to compute an independent set $I$ of $G$ of weight
    $w(I) \geq (1-\eps)\cdot \sum_{v\in V}\frac{w(v)}{\deg(v)+1}$.
\end{theorem}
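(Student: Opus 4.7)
The plan is to prove \Cref{thm:carowei} by combining a constant-factor single-shot Caro-Wei primitive with a local-ratio style amplification that boosts the approximation from a constant to $1-\varepsilon$. The primitive comes out of the local rounding framework of \cite{Localrounding23}: as recalled in the paragraph right before the theorem, that framework already yields, for any constant $\varepsilon_0>0$, an IS of weight $(\tfrac{1}{2}-\varepsilon_0)\sum_v w(v)/(\deg(v)+1)$ in $O(\log^2 \Delta+\log^* n)$ deterministic \CONGEST rounds. Setting $\varepsilon_0$ to a small absolute constant, I would treat this as a black box that, given weights $w'$, returns an IS $J$ with $w'(J)\ge \alpha\cdot B(w')$, where $B(w'):=\sum_v w'(v)/(\deg(v)+1)$ is the weighted Caro-Wei benchmark and $\alpha\in(0,1)$ is a universal constant.

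Using this primitive, I would iterate $T=\Theta(\log(1/\varepsilon))$ phases in the spirit of the amplifications of \cite{KawarabayashiKS20,Localrounding23}, but tailored to the per-vertex Caro-Wei benchmark rather than the global $w(V)/(\Delta+1)$ one. Starting from $w_0:=w$, phase $i$ computes $J_i$ by invoking the primitive on weights $w_{i-1}$ and then reweights by the local-ratio rule $w_i(u):=\max\{0,\,w_{i-1}(u)-\mu_i(u)\}$, where $\mu_i(u)$ is the weight of the heaviest $J_i$-vertex in the closed neighborhood $N[u]$, and $\mu_i(u):=0$ if no such vertex exists. After $T$ phases I output an IS $I\subseteq \bigcup_i J_i$ by a reverse-order greedy sweep: going from phase $T$ down to $1$, add a vertex of $J_i$ to $I$ whenever it is still independent of what has been added so far. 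By a local-ratio accounting the output weight is at least $\sum_i w_{i-1}(J_i)$, and the choice of $\mu_i$ is designed so that the potential $B$ contracts geometrically, $B(w_i)\le(1-\Omega(\alpha))\cdot B(w_{i-1})$. Telescoping then gives $B(w_T)\le \varepsilon\cdot B(w)$, and hence $w(I)\ge(1-\varepsilon)\cdot B(w)$, as desired. The $T$ calls to the primitive, each on a graph of maximum degree at most $\Delta$, give the claimed round complexity once an additional $\log(\Delta)/\varepsilon$-factor overhead needed for the refined contraction argument is tracked.

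The crucial step, and what I expect to be the main obstacle, is establishing the per-phase contraction $B(w_i)\le(1-\Omega(\alpha))B(w_{i-1})$. Unlike the global benchmark $w(V)/(\Delta+1)$ treated in \cite{KawarabayashiKS20}, the Caro-Wei quantity weights each vertex by $1/(\deg(v)+1)$, so naively zeroing out $N[J_i]$ might shrink $B$ by far less than $w_{i-1}(J_i)$: if $J_i$ contains a heavy, low-degree vertex $v$ whose neighbors have small weight, removing $N[v]$ barely decreases $B$ while contributing a large amount of IS weight. The local-ratio reweighting above is what couples the two quantities, because for every $v\in J_i$ and every $u\in N[v]$ the amount $w_{i-1}(v)$ is subtracted from $w_{i-1}(u)$ (capped at $w_{i-1}(u)$), so that the total depletion summed over $N[J_i]$ is proportional to $w_{i-1}(J_i)$ and, through the primitive's guarantee $w_{i-1}(J_i)\ge \alpha B(w_{i-1})$, also to $B(w_{i-1})$. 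Turning this coupling into the stated complexity is where the extra $\log(\Delta)/\varepsilon$ factor enters: to make the drop in $B$ track $w_{i-1}(J_i)$ with only a constant loss rather than a $1/\Delta$ loss, I expect to need the primitive to be invoked in each phase with a slightly finer $\varepsilon$-parameter, which multiplies the per-phase cost by an extra $\log(\Delta)/\varepsilon$ and yields the final round bound $O(\log^3(\Delta)\log(1/\varepsilon)/\varepsilon+\log^* n)$.
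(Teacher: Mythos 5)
There is a genuine gap, and it sits exactly where you flag ``the crucial step'': the per-phase contraction $B(w_i)\le(1-\Omega(\alpha))\,B(w_{i-1})$ is false in general, and the fix you propose (rerunning the primitive with a finer accuracy parameter) does not repair it. The obstruction is the one you yourself describe but then dismiss: take a vertex $v\in J_i$ of degree close to $\Delta$ carrying essentially all of $w_{i-1}(J_i)$, whose neighbors all have negligible weight. The local-ratio reweighting zeroes out $w_{i-1}(u)$ for $u\in N^+(v)$, but the resulting drop in $B$ is only about $w_{i-1}(v)/(\deg(v)+1)+\sum_{u\in N(v)}w_{i-1}(u)/(\deg(u)+1)\approx w_{i-1}(v)/(\Delta+1)$, i.e., a $1/\Delta$ fraction of the independent-set weight gained rather than a constant fraction. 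The reverse coupling fails too: for a low-degree heavy vertex of $J_i$ with many low-degree neighbors, the drop in $B$ can exceed $w_{i-1}(J_i)$ by a factor of $\Theta(\Delta)$, so you also cannot telescope via $w_{i-1}(J_i)\ge B(w_{i-1})-B(w_i)$. The root cause is that a vertex contributes $w(v)$ to the independent set but only $w(v)/(\deg(v)+1)$ to the benchmark, and this ratio varies by a factor of $\Delta+1$ across vertices; no amount of sharpening the constant $\alpha$ of the primitive removes that mismatch. Two smaller problems: your reweighting subtracts only the \emph{heaviest} $J_i$-vertex in $N^+(u)$, whereas the local-ratio accounting (cf.\ \Cref{lemma:localratio}) needs the \emph{sum} over $N^+(u)\cap J_i$, since the reverse sweep may evict several earlier-phase vertices from one closed neighborhood; and your round-complexity bookkeeping does not actually produce the $\log(\Delta)/\eps$ factor from $T=\Theta(\log(1/\eps))$ phases.

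The paper's proof resolves precisely this mismatch by a different decomposition: it partitions $V$ into $T=O(\log(\Delta)/\eps)$ degree classes $V_0,\dots,V_T$ within which degrees agree up to a factor $1/(1-\eps')$, and runs the $(1-\eps')\cdot w(V)/(\Delta+1)$ algorithm (\Cref{lemma:Deltaapprox}) once per class on $G[V_0\cup\dots\cup V_i]$ (maximum degree $d_i$), composing the outputs with \Cref{lemma:localratioseq} and controlling the loss with the potential $\Phi_i=\sum_{j\le i}w_j(I_j)(d_j+1)$. Within a class the uniform-degree benchmark and the Caro-Wei benchmark coincide up to $1\pm\eps'$, which is what restores the constant-factor coupling your contraction argument lacks; the number of classes, not a finer primitive accuracy, is the source of the extra $\log(\Delta)/\eps$ factor in the round complexity. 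If you want to salvage your amplification scheme, you would have to first restrict each phase to vertices of comparable degree, at which point you have essentially rederived the paper's argument.
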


Before discussing the main ideas needed to prove, \Cref{thm:carowei}, we discuss its direct application to approximating the maximum cardinality independent set problem in graphs of arboricity $\beta$. As proven in \cite{BoppanaHR18} and discussed in \Cref{sec:intro_sparse}, the Caro-Wei bound \eqref{eq:carowei} provides a $\frac{2\cdot(\sqrt{2}-1)}{\beta+1}$-approximation to the maximum cardinality independent set problem in graphs of arboricity $\beta$. \Cref{thm:carowei} thus implies that such an approximation can be computed in $O\big(\frac{\log^3(\Delta)\cdot\log(1/\eps)}{\eps}+\log^* n\big)$ rounds in the \CONGEST model. This can be further improved slightly. Because in a graph $G$ of arboricity $\beta$, for every $x\geq 2$, the number of nodes of degree more than $x\cdot \beta$ is at most $2/x\cdot n$ and the independence number is at least $\alpha(G)\geq n/(2\beta)$, the number of nodes of degree more than $c\cdot \beta^2/\eps$ is at most $4\eps/c\cdot \alpha(G)$. It therefore suffices to apply \Cref{thm:carowei} to the subgraph induced by the nodes of degree at most $c\beta^2/\eps$ for a sufficiently large constant $c$. We thus obtain the following corollary.

\begin{corollary}\label{cor:carowei}
  Let $G=(V,E)$ be a graph of arboricity $\beta\geq 1$. For any $\eps>0$, there is a $O\big(\frac{\log^3(\beta/\eps)\cdot\log(1/\eps)}{\eps}+\log^* n\big)$-round deterministic \CONGEST algorithm to a $\frac{2\cdot(\sqrt{2}-1)}{\beta+1}$-approximation to the maximum cardinality independent set problem.
\end{corollary}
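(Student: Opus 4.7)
The plan is to bypass the $\Delta$-dependence in Theorem~\ref{thm:carowei} by applying it only to a low-degree subgraph. Set $D:=c\beta^2/\eps$ for a sufficiently large constant $c$, let $V':=\{v\in V:\deg_G(v)\le D\}$, and let $G':=G[V']$. Each node can decide membership in $V'$ in a single \CONGEST round, and the maximum degree of $G'$ is at most $D$. Invoking Theorem~\ref{thm:carowei} on $G'$ with unit weights and parameter $\eps$ produces, in $O\bigl(\log^3(D)\log(1/\eps)/\eps+\log^*n\bigr)=O\bigl(\log^3(\beta/\eps)\log(1/\eps)/\eps+\log^*n\bigr)$ rounds, an independent set $I\subseteq V'$ of $G$ satisfying
\[
|I|\ \ge\ (1-\eps)\sum_{v\in V'}\frac{1}{\deg_{G'}(v)+1}\ \ge\ (1-\eps)\sum_{v\in V'}\frac{1}{\deg_G(v)+1}.
\]

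Next I would control how much Caro-Wei mass is lost on $V\setminus V'$. Because $G$ has arboricity $\beta$, $|E|\le\beta(n-1)$ and so $|\{v:\deg_G(v)>x\beta\}|\le 2n/x$ for every $x\ge 1$; setting $x=c\beta/\eps$ gives $|V\setminus V'|\le 2\eps n/(c\beta)$. Since each $v\notin V'$ contributes at most $1/(D+1)<1/D$ to the Caro-Wei sum, we obtain the strengthened bound
\[
\sum_{v\notin V'}\frac{1}{\deg_G(v)+1}\ <\ \frac{|V\setminus V'|}{D}\ \le\ \frac{2\eps^2 n}{c^2\beta^3}.
\]
Combined with $\mathsf{CaroWei}(G)\ge n/(2\beta+1)=\Omega(n/\beta)$, this shows that the discarded contribution is only an $O(\eps^2/c^2)$-fraction of $\mathsf{CaroWei}(G)$, and therefore $\sum_{v\in V'}1/(\deg_G(v)+1)\ge(1-O(\eps^2))\,\mathsf{CaroWei}(G)$.

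Plugging this into the first display and invoking the bound $\mathsf{CaroWei}(G)\ge\frac{2(\sqrt{2}-1)}{\beta+1}\alpha(G)$ for arboricity-$\beta$ graphs from~\cite{BoppanaHR18} (recalled in Section~\ref{sec:intro_sparse}), I obtain $|I|\ge(1-\eps)(1-O(\eps^2))\,\mathsf{CaroWei}(G)\ge(1-O(\eps))\cdot\frac{2(\sqrt{2}-1)}{\beta+1}\alpha(G)$; rescaling $\eps$ by a constant factor absorbs the hidden slack without changing the asymptotic round complexity, which proves the corollary. The only subtle point is the calibration of $D$: the naive bound $\sum_{v\notin V'}1/(\deg_G(v)+1)\le|V\setminus V'|$ would force $c$ to grow with $\beta$, so it is essential to exploit the $1/D$ per-node contribution, which is precisely why the threshold must scale as $\beta^2/\eps$ rather than $\beta/\eps$ while still keeping $\log^3 D=O(\log^3(\beta/\eps))$.
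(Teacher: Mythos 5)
Your proof is correct and follows essentially the same route as the paper: restrict to the subgraph induced by the nodes of degree at most $c\beta^2/\eps$, apply Theorem~\ref{thm:carowei} there with unit weights, and argue that the discarded nodes are negligible. The only (immaterial) difference is in the accounting of the loss: you bound the discarded Caro--Wei mass via the $1/D$ per-node contribution and compare it to $\mathsf{CaroWei}(G)=\Omega(n/\beta)$, whereas the paper compares the \emph{number} of discarded nodes directly to $\alpha(G)\ge n/(2\beta)$; both yield the claimed bound with constant $c$.
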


We next sketch how \Cref{thm:carowei} is proven and how it is used to prove \Cref{thm:arboricity}.
The starting point to proving \Cref{thm:carowei} is an $O\big(\frac{\log^2\Delta \cdot\log(1/\eps)}{\eps}+\log^* n\big)$-round algorithm from \cite{Localrounding23} to compute an independent set of weight at least $(1-\eps)\cdot \frac{w(V)}{\Delta+1}$ in a graph of maximum degree $\Delta$. The idea now is to divide the nodes of $G$ into $O(\log(\Delta)/\eps)$ degree classes so that the degrees within one class differ by at most at factor $1+\eps$. We then want to essentially apply the algorithm from \cite{Localrounding23} separately for each degree class, starting from the small degrees. This would work almost directly for unweighted graphs. To make it work, one just has to include all the still available nodes from the lower degree classes when computing the independent set of some degree class.

For weighted graphs, sequential composition of independent set is a bit more tricky. We can however utilize a technique developed in the context of the local-ratio method~\cite{localratio} and used in the context of distributed MWIS approximation in \cite{Bar-YehudaCGS17,KawarabayashiKS20,Localrounding23}. Given a graph $G$ with node weights $w(v)$ and an independent set $I_0$, one can define a new weight function $w'(u):=\max\set{0, w(u)-w(I_0\cap N^+(u))}$, where $N^+(u)$ denotes the inclusive neigborhood of $u$. If one then computes a second independent set $I'$ consisting only of nodes $v$ for which $w'(v)>0$, the combined independent set $I:=I'\cup (I_0\setminus N^+(I'))$ is guaranteed to have a total weight of at least $w(I_0)+w'(I')$ (w.r.t.\ to the original weight function $w(\cdot)$). In this way, we can iterate through the $O(\log n / \eps)$ degree classes in a similar way as in the unweighted case to obtain an independent set satisfying the requirements for \Cref{thm:carowei}.

We next discuss our algorithm to obtain the bound claimed by \Cref{thm:arboricity}. Note that since the average degree of a graph of arboricity $\beta$ can be almost $2\beta$, we might have to compute an independent set of weight close to $w(V)$ divided by the average degree. There are  graphs of arboricity $\beta$ for which this is best possible (e.g., unweighted cliques) and we therefore in some cases cannot afford to allow too much "slack" when iteratively computing an independent set. A standard tool to algorithmically deal with graphs of bounded arboricity is the so-called $H$-decomposition as introduced in \cite{barenboim08}. Because the average degree of a graph with arboricity $\beta$ is at most $2\beta$, the number of nodes of degree more than $(2+\eps)\beta$ is at most $(1-\Theta(\eps))\cdot n$. When peeling off all nodes of degree at most $(2+\eps)\beta$, we therefore get rid of a $\Theta(\eps)$-fraction of all the nodes. Repeating this idea gives a decomposition of the graph into $O(\log(n)/\eps)$ layers so that the nodes in each layer have at most $(2+\eps)\beta$ neighbors in the same layer and higher layers. We are not aware of an algorithm to compute an independent set of weight close to $w(V)/(2\beta)$ that does not at least implicitly use an $H$-decomposition (which in most cases leads to a time complexity that is at least linear in the number of layers $\Theta(\log(n)/\eps)$).

The highlevel idea of our approach is the following. We again use the framework of \cite{localratio,Bar-YehudaCGS17,KawarabayashiKS20} to iteratively construct an independent set in the weighted setting. Because the nodes in the lowest layer are always the ones of bounded degree, we start our iterative construction at the lowest layer and work our way up the decomposition. Some of the main challenges with this approach already appear for the unweighted case and in the following highlevel discussion, we therefore focus on this case. Let $V_0$ be the nodes of the lowest layer of our decomposition. All nodes in $V_0$ have degree at most $(2+\eps)\beta$ and we can therefore compute an independent set $I_0\subseteq V_0$ of size close to $|V_0|/(2\beta)$ (or even close to the Caro-Wei bound \eqref{eq:carowei} within $V_0$). However, even if the independent set $I_0$ is relatively large, it could consist of nodes that mostly have neighbors in $V\setminus V_0$ so that a lot of nodes in $V_0$ have no neighbor in $I_0$. When moving to the next layer, we can however not include any remaining nodes in $V_0$ because this might create some  high degree nodes in the layer, we are considering. Our goal therefore should be to select an independent set $I_0$ in $V_0$ such that it removes most of the nodes in $V_0$. As an estimate of the number of removed nodes in $V_0$, we use $\sum_{v\in I_0} (\deg_0(v)+1)$, where $\deg_0(v)$ is the degree of $v$ within $V_0$. If this sum is close to $|V_0|$, we are guaranteed to make sufficient progress (even if many nodes of $V_0$ remain because other nodes in $V_0$ have multiple neighbors in $I_0$). We can compute such an independent set $I_0$ as follows. We define a weight $\overline{w}(v):=\deg_0(v)+1$ for each node $v\in V_0$. Note that for an independent set $I_0$, we have $\overline{w}(I_0)=\sum_{v\in I_0} (\deg_0(v)+1)$. By using \Cref{thm:carowei}, we can compute an independent set $I_0$ of total weight $\overline{w}(I_0)\approx\sum_{v\in V_0}\frac{\overline{w}(v)}{\deg_0(v)+1}=|V_0|$, which is exactly what we need. In \Cref{sec:arboricity}, we provide the details and show that the same idea also directly works in the weighted case.

\subsection{Mathematical Preliminaries and Notation}\label{sec:prelim}

We next define the most important mathematical notation that we use throughout the paper. Some of this notation was already used in the description of the related work and of our results above. Consider some graph $G=(V,E)$. An \emph{independent set} $I\subseteq V$ is a set of pairwise non-adjacent nodes. The size of a largest independent set of $G$ is given as the \emph{independence number} $\alpha(G)$ of $G$. The \emph{arboricity} $\beta$ of a graph $G$ is the minimum number of trees needed to cover all edges of $E$ (or equivalently the minimum number of forests into which the set of edges can be decomposed). Note that this implies that a graph $G$ of arboricity $\beta$ has at most $\beta(n-1)$ edges and it does have average degree $<2\beta$.

For a node $v\in V$, we use $N(v)$ to denote the set of neighbors of $v$ and we use $\deg(v)=|N(v)|$ to denote the degree of $v$. We further define $N^+(v):=\set{v}\cup N(v)$ as the inclusive neighborhood of $v$. For a set of nodes $S\subseteq V$, we use $N(S)$ to denote the set of neighbors of $S$, i.e., $N(S)=(\bigcup_{v\in S} N(v))\setminus S$. We further use $N^+(S):=S\cup N(S)$ to denote the inclusive neighborhood of $S$. If we are given node weights $w:V\to\Rp$, we use the following shortcut notation. For a set $S\subseteq V$ of nodes, we define $w(S):=\sum_{v\in S} w(v)$. If we have edge weights $w:E\to\Rp$, we similarly define $w(F):=\sum_{e\in F} w(e)$ for a set $F\subseteq E$ of edges. The $\log$ terms in this paper are to the base 2 unless specified.

\subsection{Organization of Paper}
The remainder of the paper is organized as follows. In \Cref{sec:trees}, we prove \Cref{thm:mainTree} and in \Cref{sec:arboricity}, we show \Cref{thm:arboricity_rounding}, then prove \Cref{thm:carowei} and \Cref{thm:arboricity}.

\section{A Tight Bound for Approximating MWIS in Trees}
\label{sec:trees}

In the following, we prove \Cref{thm:mainTree} and thus the fact that the deterministic complexity of computing a $(1-\eps)$-approximation for MWIS in trees in the \CONGEST model is $\Theta\big(\frac{\log^* n}{\eps}\big)$. We start by proving the upper bound.

\medskip 

\subsection{Upper Bound}
As described in \Cref{sec:contribTrees}, our algorithm for trees is based on ideas developed in \cite{CzygrinowHW08}. The highlevel idea of the algorithm of \cite{CzygrinowHW08} is the following. Given a tree $T=(V,E)$ with node weights $w:V\to\Rp$, one first defines edge weights $w:E\to \Rp$ as follows. For every edge $e=\set{u,v}\in E$, the weight of $e$ is defined as
\begin{equation}\label{eq:edgeweight}
    \forall e=\set{u,v}\in E\,:\,
    w(e) := \min\set{w(u), w(v)}.
\end{equation}
The algorithm now consists of three main steps:

\begin{enumerate}
    \item Partition the nodes $V$ into clusters $V_1,V_2,\dots,V_s$ such that the nodes of each cluster $V_i$ induce a connected subtree.
    \item Compute an optimal (weighted) independent set $I_i\subseteq V_i$ for the induced subtree $T[V_i]$ for every cluster $V_i$. Let $I:=I_1\cup I_2\cup \dots\cup I_s$.
    \item For every intercluster edge $\set{u,v}$ (i.e., for every edge $\set{u,v}$ such that $u$ and $v$ are in different clusters), if both $u$ and $v$ are in $I$, remove the lower weight node from $I$.
\end{enumerate}

The algorithm clearly computes an independent set. The following lemma shows that the independent set is close to optimal if the total weight of all intercluster edges is small.

\begin{lemma}\cite{CzygrinowHW08}\label{lemma:clusteringIS}
    Let $E_I\subseteq E$ be the set of intercluster edges for the given clustering $V=V_1\dot{\cup}V_2\dot{\cup}\dots\dot{\cup} V_s$. If $w(E_I)\leq \frac{\eps}{2}\cdot w(E)$, we have $w(I) \geq (1-\eps)\cdot w(I^*)$, the $I^*$ is an optimal independent set.
\end{lemma}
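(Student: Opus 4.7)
The plan is to directly estimate $w(I)$ by separately tracking the gain from the optimal within-cluster independent sets and the loss incurred in step 3. Since $V_1,\dots,V_s$ partitions $V$ and each $I_i$ is an MWIS in the induced subtree $T[V_i]$, the restriction $I^*\cap V_i$ is a feasible candidate, so $w(I_i)\geq w(I^*\cap V_i)$; summing over clusters yields $w\big(\bigcup_i I_i\big)\geq w(I^*)$. For step 3, whenever an intercluster edge $e=\{u,v\}$ has both endpoints in $\bigcup_i I_i$, exactly the lower-weight endpoint, of weight $\min\{w(u),w(v)\}=w(e)$, is removed. Charging each removed vertex to the edge responsible for its removal (picking any one if several compete), the total removed weight is at most $w(E_I)\leq \tfrac{\eps}{2}w(E)$. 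Hence $w(I)\geq w(I^*)-\tfrac{\eps}{2}w(E)$.

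The remaining task is to upgrade $\tfrac{\eps}{2}w(E)$ into $\eps\cdot w(I^*)$. I would prove that in any weighted tree with edge weights given by \eqref{eq:edgeweight} one has $w(E)\leq 2\,w(I^*)$, which factors into two one-line bounds. First, $w(E)\leq w(V)$: root $T$ arbitrarily and associate each non-root node $v$ with its parent edge, whose weight is $\min\{w(v),w(p(v))\}\leq w(v)$; summing over $v$ gives $w(E)\leq \sum_{v\neq \mathrm{root}}w(v)\leq w(V)$. Second, $w(I^*)\geq w(V)/2$: every tree is bipartite, and the heavier side of a $2$-coloring is an independent set of weight at least $w(V)/2$. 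Combining, $w(E)\leq w(V)\leq 2\,w(I^*)$, so plugging back yields $w(I)\geq w(I^*)-\eps\cdot w(I^*)=(1-\eps)\,w(I^*)$, as claimed.

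I do not expect a real obstacle here; the argument is more assembly than discovery. The only subtle point is the accounting in step 3: we need the removed vertices to be charged injectively to distinct intercluster edges, so the tie-breaking rule should be fixed consistently (for instance, charging a removed vertex $v$ to the lowest-ID intercluster edge that forces its removal). The tree-specific inequalities $w(E)\leq w(V)$ and $w(I^*)\geq w(V)/2$ are the only places where the hypothesis that $T$ is a tree (bipartite, rootable, with $n-1$ edges) is used; everything else goes through for any graph.
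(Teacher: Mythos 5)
Your argument is correct and matches the paper's proof essentially step for step: bound $w(I)$ before Step 3 by $w(I^*)$ via the cluster restrictions, charge the Step-3 losses to intercluster edges to lose at most $w(E_I)\le\tfrac{\eps}{2}w(E)$, then convert via $w(E)\le w(V)$ (parent-edge charging in a rooted tree) and $w(I^*)\ge w(V)/2$ (two-colorability). No gaps; nothing further is needed.
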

\begin{proof}
    First note that before Step 3 of the algorithm, we clearly have $w(I)\geq w(I^*)$. After Step 3, we therefore have
    \[
    w(I) \geq w(I^*) - \sum_{e\in E_I} w(e) \geq
    w(I^*) - \frac{\eps}{2}\cdot w(E).
    \]
    To prove the lemma, first observe that $w(E) \leq w(V)$. To see this, define an arbitrary node $r\in V$ as root and consider $T$ as a rooted tree. For each node $v\in V\setminus\set{r}$, the weight of the edge $e$ connecting $e$ to its parent is at most $w(v)$. The total weight of all edges is therefore at most $\sum_{v\in V\setminus\set{r}} w(v) \leq w(V)$. Because a tree is two-colorable, we further know that $w(I^*)\geq w(V)/2$. We therefore have $\frac{\eps}{2}\cdot w(E) \leq \frac{\eps}{2}\cdot w(V)\leq \eps\cdot w(I^*)$.
\end{proof}

Computing an optimal independent set $I_i$ within some cluster $V_i$ can be done in time linear in the diameter of $T[V_i]$ by applying a relatively straightforward dynamic programming algorithm. For completeness, we formally prove it in the following.

\begin{lemma}\label{lemma:treeOpt}
	Let $T=(V,E)$ be a tree with node weights $w:V\to \Rp$ and assume that the diameter of $T$ is $D$. Then, there is an $O(D)$-round deterministic \CONGEST algorithm to optimally solve the MWIS problem on $T$.
\end{lemma}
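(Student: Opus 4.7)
The plan is to implement the textbook rooted-tree dynamic program for MWIS in the \CONGEST model, showing that each of its phases takes $O(D)$ rounds on a tree of diameter $D$.

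First, I would elect a root and orient the tree. Let $r\in V$ be the node of smallest identifier; it can be identified by a standard min-finding flood (each node keeps the smallest identifier seen so far and forwards any decrease) that converges in at most $D$ rounds, after which every node knows its parent pointer on the BFS tree from $r$. Since the eccentricity of $r$ is at most $D$, the resulting rooted tree has depth at most $D$. Each node also learns the identities of its children as a by-product. This phase uses $O(D)$ rounds and $O(\log n)$-bit messages, so it is well within \CONGEST.

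Next, I would run the standard MWIS dynamic program bottom-up. For each node $v$, define
\[
    f(v,1) := w(v) + \sum_{c \in \mathrm{children}(v)} f(c,0),
    \qquad
    f(v,0) := \sum_{c \in \mathrm{children}(v)} \max\bigl\{f(c,0),\, f(c,1)\bigr\}.
\]
Leaves initialize $f(v,1)=w(v)$ and $f(v,0)=0$. Each non-leaf $v$ waits until every child has reported its pair, then sends its own pair to its parent. Because the rooted tree has depth at most $D$, all values reach $r$ in at most $D$ rounds; by the standard correctness of tree DP, $\max\{f(r,0),f(r,1)\}$ equals the optimum MWIS weight in $T$. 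Each message carries a constant number of values of the same type as node weights, so under the paper's stated convention that a node weight fits in a single \CONGEST message, this phase respects the bandwidth constraint.

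Finally, a top-down pass of $O(D)$ rounds converts the DP values into an actual independent set: $r$ decides to join $I^*$ iff $f(r,1) \ge f(r,0)$, then each node that knows its own status informs its children whether they are forced out (parent joined) or free to choose the maximizer of $\{f(c,0), f(c,1)\}$. By construction the output is an independent set of weight $\max\{f(r,0),f(r,1)\}$, i.e., optimal. Summing the three phases gives the claimed $O(D)$-round complexity. The only subtle point is the \CONGEST bandwidth: DP values are partial sums of node weights and in principle could exceed $O(\log n)$ bits, but this is handled by the model assumption in \Cref{sec:intro} that a single weight is transmittable in one message, which I would invoke explicitly; no other obstacle arises.
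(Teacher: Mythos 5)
Your proposal is correct and follows essentially the same approach as the paper's proof: root the tree, run the standard bottom-up dynamic program computing the in/out values, and recover the independent set with a top-down pass, all in $O(D)$ rounds. The extra care you take with root election and with the bandwidth of the DP values (partial sums of weights) is a reasonable elaboration of details the paper leaves implicit.
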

\begin{proof}  
    The idea is to do a \CONGEST implementation of a standard dynamic programming solution. The dynamic programming solution is easiest to explain on a rooted tree. We therefore first choose a root node and orient the tree towards the root node. This can clearly be done in $O(D)$ rounds. 

    The algorithm now starts at the leaf nodes. In a bottom-up fashion, every node $u$ computes two values, $\OPT_{in}(u)$ and $\OPT_{out}(u)$, the values of the MWIS of the subtree rooted at $u$, provided that $u$ in the independent set ($\OPT_{in}(u)$) or not ($\OPT_{out}(u)$). In addition, we define $\OPT(u):=\max \{\OPT_{in}(u), \OPT_{out}(u)\}$. The values $\OPT_{in}(u)$ and $\OPT_{out}(u)$ are computed as follows (where $N_c(v)$ denotes the set of children of a node $v$):
    \begin{eqnarray*}
        \OPT_{in}(v) & := & w(v) + \sum_{u \in N_{c}(v)} \OPT_{out}(u),\\
        \OPT_{out}(v) & := & \sum_{u \in N_{c}(v)} \OPT(u).
    \end{eqnarray*}
    Which nodes are in the independent set is then computed in a top-down fashion. For the root node $r$, $r$ is added to the independent set iff $\OPT(r)=\OPT_{in}(r)$. For all $v \in V \setminus \{r\}$, $v$ is added to the independent set iff its parent is not in the independent set and $\OPT(v)=\OPT_{in}(v)$.

    The round complexity of this algorithm is linear in the height of the rooted tree, which is $O(D)$.
\end{proof}

Therefore, the core part of the algorithm is to compute a good clustering where the total weight of the intercluster edges is at most $\frac{\eps}{2}\cdot w(E)$ and where the cluster diameter is as small as possible (ideally $O(1/\eps)$ as this is clearly the best possible, e.g., on paths with uniform edge weights). To argue about the quality of a clustering, we make the following definition.

\begin{definition}[$(d,\lambda)$-Clustering]\label{def:clustering}
    Let $d\geq 0$ be an integer and let $\lambda\in [0,1]$. A $(d,\lambda)$-clustering of an edge-weighted tree $T=(V,E,w)$ is a partition of $V$ into connected clusters $V_1,\dots,V_s$ such that for all $V_i$, the induced subtree $T[V_i]$ has diameter at most $d$ and the total weight of intercluster edges is at most $\lambda\cdot w(E)$.
\end{definition}

In \cite{CzygrinowHW08}, it is shown that given some clustering, one can merge some clusters such that in the resulting clustering, the maximum cluster diameter only increases by a constant factor and such that the total weight of the intercluster edges decreases by a constant factor. More specifically, we will next prove the following lemma, which is proven in \cite{CzygrinowHW08} for the \LOCAL model. Also note that the proof of the following lemma mostly follows from the arguments in \cite{CzygrinowHW08}, but we add it for completeness.

\begin{restatable}{lemma}{restateclustering}\label{lemma:clusterincrease}\cite{CzygrinowHW08}
    Given an edge-weighted tree $T=(V,E,w)$ with a given $(d,\lambda)$-clustering, there is a deterministic $O(d\cdot \log^* n)$-round \CONGEST algorithm to compute an $(9d+8, 3\lambda/4)$-clustering of $T$.
\end{restatable}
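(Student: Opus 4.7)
The plan is to work on the \emph{cluster tree} $T^*=(V^*,E^*)$, whose vertices are the clusters $V_1,\dots,V_s$ of the given $(d,\lambda)$-clustering and whose edges are exactly the intercluster edges of $T$, inheriting their weights. Since each $V_i$ induces a connected subtree of $T$ and $T$ is a tree, $T^*$ is a tree, and its total edge weight is $W^*\le \lambda\cdot w(E)$. One round of computation on $T^*$ can be simulated in $O(d)$ CONGEST rounds on $T$: each cluster elects a leader via BFS on its induced subtree (diameter $\le d$), aggregates the messages that it must send over its incident intercluster edges, and broadcasts back what it receives. Hence any $r$-round procedure on $T^*$ using $O(\log n)$-bit messages is implementable in $O(d\cdot r)$ CONGEST rounds on $T$.

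On $T^*$ I would first compute a proper $3$-coloring using the Cole--Vishkin / Linial algorithm, which takes $O(\log^*|V^*|)=O(\log^* n)$ rounds on $T^*$ and therefore $O(d\cdot\log^* n)$ CONGEST rounds on $T$. Using this coloring as a symmetry-breaking primitive, I would then run a constant number of weight-aware merging sweeps on $T^*$ that output a partition of $V^*$ into subtrees $S_1,\dots,S_{s'}$ of $T^*$, each of $T^*$-diameter at most $8$; the new cluster $V_j'$ is the union of the original clusters in $S_j$. During a sweep the three color classes are processed in order, and in its turn a cluster either attaches itself, through its heaviest still-available incident $T^*$-edge, to a subtree already initiated by an earlier-colored neighbor, or founds a new subtree, subject to the diameter cap.

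The diameter bound is immediate: a path in $T$ between two vertices of a merged cluster $V_j'$ crosses at most $\ell\le 8$ intercluster edges and pieces of at most $\ell+1\le 9$ original clusters, each of which contributes length at most $d$, for a total of $(\ell+1)d+\ell \le 9d+8$. The substantive obstacle is the weight accounting: I must argue that the sweeps absorb at least $\tfrac14\cdot W^*$ worth of $T^*$-edge weight, so that the new intercluster weight is at most $\tfrac34\cdot W^*\le \tfrac{3\lambda}{4}\cdot w(E)$. My plan is a local charging argument in the style of Czygrinow, Ha\'nckowiak, and Wawrzyniak~\cite{CzygrinowHW08}: every $T^*$-edge $e$ that remains intercluster must have been rejected during some sweep because the endpoint of $e$ responsible for the decision had already committed its diameter-$8$ budget to other, at-least-as-heavy $T^*$-edges; those blocker edges are absorbed, they lie in a bounded-radius $T^*$-neighborhood of $e$, and by the $O(1)$-coloring the charges coming from distinct rejected edges overlap only with bounded multiplicity. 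Balancing this multiplicity against the per-cluster budget of $8$ is what pins down the fraction to $\tfrac14$ and is the source of the specific constants $9$ and $8$ in the statement; this is the step where the main care of the proof is needed.
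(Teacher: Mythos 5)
Your skeleton matches the paper's: contract the clusters to a tree $T^*$ ($T'$ in the paper), simulate each $T^*$-round in $O(d)$ rounds on $T$, use a $3$-coloring for symmetry breaking, retain a quarter of the intercluster weight inside merged clusters of $T^*$-diameter at most $8$, and conclude the $9d+8$ bound exactly as you do. But there are two genuine gaps. First, you propose to $3$-color the whole cluster tree $T^*$ with Cole--Vishkin in $O(\log^* n)$ rounds. $T^*$ is unrooted and can have unbounded degree, and deterministic $3$-coloring of such trees is not possible in $O(\log^* n)$ rounds (Cole--Vishkin needs a rooted tree, where each node reduces its color using only its parent's color). This is precisely why the paper does not color $T^*$: it first has every cluster-node mark its heaviest incident intercluster edge, which yields a subforest $E'_M$ carrying at least half of $w(E')$ \emph{and} admitting an out-degree-$\le 1$ orientation (each node marked exactly one edge), i.e.\ a rooted forest on which Cole--Vishkin does run in $O(\log^* n)$ rounds. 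Without some such preselection step your coloring stage does not go through.

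Second, the weight accounting --- the actual content of the $3\lambda/4$ bound --- is only sketched as a plan ("charging with bounded multiplicity, to be pinned down"), and the route you sketch is not the one that works cleanly. The paper's argument is a two-step halving, not a charging argument over merging sweeps: (i) the marked forest $E'_M$ has $w(E'_M)\ge w(E')/2$ because each node marks an edge at least as heavy as its outgoing edge in an out-degree-$\le 1$ orientation of $T^*$ and each edge is marked at most twice; (ii) in the pruning step each colored node deletes the lighter of the two edge sets it is responsible for (its parent edge versus its child edges), so at most half of $w(E'_M)$ is deleted, leaving $w(E''_M)\ge w(E')/4$; the diameter bound then comes from showing that after pruning no oriented path in the surviving forest has more than $5$ nodes (color-$1$ nodes cannot be internal, color-$2$ nodes sit within distance $1$ of an endpoint). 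A minor further caveat: your blanket claim that \emph{any} $r$-round $T^*$-procedure simulates in $O(dr)$ CONGEST rounds is too strong, since a cluster leader cannot collect many distinct $O(\log n)$-bit messages without congestion; the simulation works here only because every operation used is either an aggregate (max/sum toward the leader) or a broadcast from a node to its children.
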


\begin{proof}
  We first describe the algorithm in an abstract way and afterwards argue why it can be implemented in the deterministic \CONGEST model in the claimed time. For the abstract description, we contract each of the clusters of the initial $(d,\lambda)$-clustering to a single node. Let $T'=(V',E')$ be the resulting tree. Note that the edges $E'$ of $T'$ are exactly the intercluster edges of the initial $(d,\lambda)$-clustering.\footnote{Indeed, any two contracted cluster nodes in $T'$ can only be joined by at most one edge in $T$, otherwise we have a cycle and trees are acyclic. Moreover, notice that if our given graph is instead planar, then congestion can occur. Thus, an open problem would be how to implement this lemma in the \CONGEST model for planar graphs that have higher girth i.e. in $w(1/\eps)$ or alternatively find a $(O(1/\eps),\eps)$-clustering as fast as possible, as well solve MWIS on planar graphs in $O(D)$ rounds to eventually be able to use \Cref{lemma:clusteringIS} and get a $(1-\eps)$ approximate solution efficiently.} We therefore have $w(E')\leq \lambda\cdot w(E)$. For each node $v\in V'$, let $e_v\in E'$ be the heaviest incident edge in $T'$ (ties broken arbitrarily). In the first step, each node $v\in V'$ marks edge $e_v$.

  Let $E'_M$ be the set of edges that are marked in this way. Consider an arbitrary outdegree $\leq1$ orientation of $T'$ and let $w_v$ be the weight of $v$'s outgoing edge in this orientation ($w_v=0$ is $v$ has no outgoing edge). Note that $\sum_{v\in V'}w_v=w(E')$ and that $w(e_v)\geq w_v$.  Note also that every edge $e\in E'$ can be marked by at most two nodes. We therefore have $w(E_M')\geq 1/2\cdot \sum_{v\in V'} w(e_v) \geq w(E')/2$. We orient the edges in $E_M'$ in the following way. For every $\set{u,v}\in E_M'$, if $\set{u,v}$ was only marked by $u$, we orient the edge from $u$ to $v$, if $\set{u,v}$ was only marked by $v$, we orient the edge from $v$ to $u$, and if $\set{u,v}$ was marked by both nodes, we orient the edge arbitrarily (e.g., from smaller to larger ID). In this way, we obtain a rooted version of the forest $T'[E_M']$ induced by the marked edges $E_M'$, $T'[E_M']$ denotes the edge-induced subgraph consisting of edge set $E_M'$ and node set $V(E_M')$.

  As the last step, we want to break $T'[E_M']$ into trees of constant diameter. To do this, we first compute a $3$-coloring of the rooted forest $T'[E_M']$. For a node $v\in V'$, let $e_p(v)$ be the edge connecting $v$ to its parent in $T'[E_M']$ (if $v$ has a parent) and let $C(v)$ be the set of edges connecting $v$ to its children in $T'[E_M']$. Every node of color $1$ does the following. If $v$ has a parent and if $w(e_p(v))>w(C(v))$, then the edges in $C(v)$ are removed from $E'_M$. Otherwise, the edge $e_p(v)$ is removed from $E'_M$ (if $e_p(v)$ exists). Further, each node of color $2$ does the following. Let $C_3(v)$ be the set of edges connecting $v$ to children of color $3$. If $v$ has a parent $v$ of color $3$ and if $w(e_p(v))>w(C_3(v))$, then the edges in $C_3(v)$ are removed from $E'_M$. Otherwise,  if $v$ has a parent $v$ of color $3$, the edge $e_p(v)$ is removed from $E'_M$. Note that in this way, every edge in $E_M'$ can only be removed by one of its incident nodes (edges that contain a color-$1$ node can only be removed by the color-$1$ node and edges between nodes of colors $2$ and $3$ can only be removed by the color-$2$ node). Because each node removed at most half the weight for which it is responsible, the process removes at most half of the total weight in $E'_M$. Let the remaining edge set be $E''_M$. We therefore have $w(E''_M)\geq w(E')/4$. The new clusters are now defined by the subtrees of $T$ that are induced by the edges in $E\setminus E' \cup E''_M$.

  The new set of intercluster edges is $E'\setminus E''_M$ and we therefore have $w(E'\setminus E''_M)\leq 3/4\cdot w(E')\leq 3/4\cdot\lambda\cdot w(E)$. To show that the new clustering is a $(9d+8, 3\lambda/4)$-clustering, it therefore remains to prove that the diameter of the new clusters is at most $9d+8$. To prove this, we show that the length of the longest oriented path in the induced rooted forest $T'[E''_M]$ is at most $4$ (i.e., it consists of at most $5$ nodes). The bound then follows because this implies that the longest path when ignoring orientation is at most $8$ and thus consists of at most $9$ nodes. Each node represents a cluster of diameter $\leq d$ and the $\leq 9$ clusters are connected by $\leq 8$ edges in $E'$. To see that the longest oriented path consists of at most $5$ nodes, observe that because nodes of color $1$ either remove their outgoing edge or all their incoming edges. Hence, nodes of color $1$ cannot be inner nodes of an oriented path of $T'[E''_M]$. With a similar argument, a node of color $2$ cannot be at distance more than $1$ from the end of an oriented path, because then such a node would have an incoming and and outgoing neighbor of color $3$. Since nodes at distance $2$ or more from the end of an oriented path can only have color $3$, there can be at most one such node in any oriented path.

  To conclude the proof, it remains to show that the above abstract algorithm can be implemented in $O(d\cdot\log^* n)$ rounds in the deterministic \CONGEST model. Each cluster can certainly find and mark a heaviest outgoing edge in time $O(d)$ and the forest $T'[E_M']$ can therefore be computed in $O(d)$ rounds. A $3$-coloring of $T'[E_M']$ can be computed by running the Cole-Vishkin algorithm~\cite{cole86}. The algorithm requires $O(\log^* n)$ rounds on a rooted tree. In each round of the Cole-Vishkin algorithm, each node broadcasts its current value (an intermediate color) to all its children. No other communication is necessary. Each round of the Cole-Vishkin algorithm on $T'$ can therefore be simulated in $O(d)$ rounds on the underlying tree $T$. The $3$-coloring $T'[E_M']$ can therefore be computed in $O(d\cdot\log^* n)$ rounds. The final step of computing the set $E_M''$ can clearly also be implemented in $O(d)$ rounds on $T$. The new clustering can therefore be computed in $O(d\cdot\log^* n)$ rounds in the deterministic \CONGEST model on $T$.
\end{proof}

The algorithm of \cite{CzygrinowHW08} uses the construction of \Cref{lemma:clusterincrease} (with an additional step that gives slightly better constants). Because the factor by which the maximum cluster diameter increases is larger than the factor by which the weight of the intercluster edges decreases, this in the end results in a $(\poly(1/\eps), \eps/2)$-clustering, which can then be used in \Cref{lemma:clusteringIS}. In \cite{CzygrinowHW08}, the algorithm is described for the family of planar graphs. In the special case of trees, we can do better. The following lemma shows that if the maximum cluster diameter is too large, it can be reduced at the cost of increasing the weight of the intercluster edges by a little bit.

\begin{lemma}\label{lemma:clusterdecrease}
    Given an edge-weighted tree $T=(V,E,w)$ with an $(d,\lambda)$-clustering and a parameter $\delta>0$, there is a deterministic $O(d)$-round \CONGEST algorithm to compute a $(\lfloor 2/\delta\rfloor, \lambda+\delta)$-clustering of $T$.
\end{lemma}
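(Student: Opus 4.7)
The plan is to split each cluster of the input $(d,\lambda)$-clustering into smaller sub-clusters by deleting a carefully chosen ``horizontal layer'' of edges, with the layer picked via a shift-and-averaging argument.

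First dispose of the trivial case: if $\lfloor 2/\delta\rfloor\ge d$, the input clustering already satisfies the desired diameter bound and can be returned unchanged. Otherwise set $k':=\lceil 1/\delta\rceil$; a short case distinction on whether $1/\delta$ is integral shows $2(k'-1)\le\lfloor 2/\delta\rfloor$, and the assumption $\lfloor 2/\delta\rfloor<d$ gives $k'\le d$.

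Now process all clusters $V_i$ in parallel. Within each cluster $V_i$, elect the minimum-ID node of $T[V_i]$ as a leader, root $T[V_i]$ at it, and have each node learn its depth; since $T[V_i]$ has diameter at most $d$, this costs $O(d)$ rounds. For each offset $r\in\{0,1,\ldots,k'-1\}$, let $F_r^{(i)}\subseteq E_{V_i}$ be the set of edges joining a node at depth $d'\equiv r\pmod{k'}$ to one of its children. The sets $F_0^{(i)},\dots,F_{k'-1}^{(i)}$ partition $E_{V_i}$, so by averaging some offset $r_i^*$ satisfies $w(F_{r_i^*}^{(i)})\le w(E_{V_i})/k'$. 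The leader learns all $k'$ weights $w(F_r^{(i)})$ by a pipelined convergecast up $T[V_i]$, picks the minimizer $r_i^*$, and broadcasts it back down. Finally, delete the edges in $F_{r_i^*}^{(i)}$; the connected components of the remaining forest are the new sub-clusters. Two bounds then need to be checked. Each new sub-cluster is a connected subtree whose nodes lie in at most $k'$ consecutive depth levels of the old rooted cluster; because no remaining edge climbs above its topmost level, that topmost node is a common ancestor of everything else in the piece, so the diameter is at most $2(k'-1)\le\lfloor 2/\delta\rfloor$. Summed over clusters, the total weight of newly cut edges is at most $\sum_i w(E_{V_i})/k'\le w(E)/k'\le\delta\cdot w(E)$, so the total intercluster weight remains at most $(\lambda+\delta)\cdot w(E)$.

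The main obstacle is keeping the convergecast inside the $O(d)$-round \CONGEST budget, since each node must eventually report $k'$ partial sums up the cluster tree. The fix is the standard tree-pipelining trick: the $k'$ sums are forwarded upward one per round, so that the leader collects the $t$-th partial sum at round $\Theta(d+t)$; this yields total time $O(d+k')$, which is $O(d)$ precisely because the trivial-case reduction above guarantees $k'\le d$. All other steps (leader election, BFS rooting, broadcasting $r_i^*$, and performing the cut) are standard $O(d)$-round procedures, so the overall complexity is $O(d)$ as claimed.
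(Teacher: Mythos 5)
Your proposal is correct and follows essentially the same route as the paper: root each cluster, partition its edges into $\lceil 1/\delta\rceil$ classes by depth modulo $\lceil 1/\delta\rceil$, cut the lightest class by averaging, and observe that the resulting pieces span at most $\lceil 1/\delta\rceil$ consecutive levels below a common ancestor. Your explicit pipelined convergecast and the check that $k'\le d$ in the non-trivial case is a slightly more careful treatment of the \CONGEST round count than the paper gives, but it is the same algorithm and the same analysis.
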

\begin{proof}
    We focus on one cluster $V_i$. If the diameter of the cluster $V_i$ is already at most $2\lfloor 1/\delta\rfloor$, we do not need to do anything. Otherwise, we define an arbitrary node $v_i\in V_i$ as the root node of the cluster and we compute the distance of every node $u\in V_i$ to $v_i$. This can clearly be done in $O(d)$ rounds as the cluster $T[V_i]$ has diameter at most $d$. For each $r\geq 0$, we let $V_{i,r}$ be the set of nodes in $V_i$ that are at distance exactly $r$ from $v_i$. We say that $V_{i,r}$ are the layer-$r$ nodes of the cluster $V_i$. Note that every edge of $T[V_i]$ is between two adjacent layers $r$ and $r+1$ for some $r\geq 0$. For every $p\in \set{0,\dots,\lceil 1/\delta\rceil-1}$, we now let $E_p$ be the set of edges that are between nodes of layer $r$ and $r+1$ for some $r$ for which $r\equiv p \pmod{\lceil 1/\delta\rceil}$. Note that the sets $E_0,\dots,E_{\lceil1/\delta\rceil-1}$ partition the edges of $T[V_i]$. In $O(d)$ rounds, we can compute $\sum_{e\in E_p} w(e)$ for every $p\in \set{0,\dots,\lceil 1/\delta\rceil-1}$. Clearly, for some $p_0$, the total weight of the edges in $E_{p_0}$ is at most an $1/\lceil1/\delta\rceil\leq \delta$-fraction of the total weight of all the edges in $T[V_i]$. We split the cluster $V_i$ into smaller clusters by removing all the edges in $E_{p_0}$. This creates smaller clusters of diameter at most $2(\lceil 1/\delta \rceil-1)\leq \lfloor 2/\delta\rfloor$. The total weight that is added to the intercluster edges is at most an $\delta$-fraction of the total weight of all intracluster edges and thus at most an $\delta$-fraction of the total weight of all the edges. This concludes the proof.
\end{proof}

We now have everything we need to prove the upper bound part of \Cref{thm:mainTree}.

\begin{lemma}\label{lemma:treeUpper}
    Let $\eps>0$ be a parameter. There exists a deterministic $O\big(\frac{\log^* n}{\eps}\big)$-round \CONGEST algorithm to compute a $(1-\eps)$-approximation MWIS in trees.
\end{lemma}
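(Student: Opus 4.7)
The plan is to combine the three-step algorithm described in \Cref{sec:trees} (define edge weights via \eqref{eq:edgeweight}, solve MWIS optimally inside each cluster, then clean up intercluster edges) with a carefully controlled clustering. Specifically, I will construct, using \Cref{lemma:clusterincrease} and \Cref{lemma:clusterdecrease} iteratively, a $(d,\lambda)$-clustering with $\lambda\le \eps/2$ and $d=O(1/\eps)$. Given such a clustering, \Cref{lemma:treeOpt} computes an optimal MWIS in each cluster in $O(1/\eps)$ rounds, and \Cref{lemma:clusteringIS} certifies that the resulting independent set is a $(1-\eps)$-approximation of the MWIS.

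I will build the clustering iteratively, starting from the trivial $(0,1)$-clustering in which every vertex forms its own cluster. At iteration $i$, given a current $(d_i,\lambda_i)$-clustering, I first invoke \Cref{lemma:clusterincrease} to obtain a $(9d_i+8,\tfrac{3}{4}\lambda_i)$-clustering, and then invoke \Cref{lemma:clusterdecrease} with parameter $\delta_i := c\lambda_i$ for a small fixed constant $c$ (taking $c=\tfrac{1}{8}$ works cleanly). This yields a new clustering with
\[
\lambda_{i+1} \;=\; \bigl(\tfrac{3}{4}+c\bigr)\lambda_i \;=\;\tfrac{7}{8}\lambda_i,
\qquad
d_{i+1} \;=\; \bigl\lfloor 2/(c\lambda_i)\bigr\rfloor \;=\; O\bigl((8/7)^{i}\bigr).
\]
The iteration stops as soon as $\lambda_i\le \eps/2$, which by the geometric decay occurs after $K = O(\log(1/\eps))$ steps; the final cluster diameter is $d_K = O((8/7)^K) = O(1/\eps)$, as required.

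The crucial design choice is to scale $\delta_i$ with the current weight $\lambda_i$ rather than with the final target $\eps$. This keeps the weight shrinking geometrically (so the iteration terminates in $O(\log(1/\eps))$ steps) while tying the diameter inflation of \Cref{lemma:clusterdecrease} to the current weight, producing a geometric sequence of diameters $d_i=O((8/7)^i)$ that sums to $O(1/\eps)$. Since each iteration costs $O(d_i\log^* n)$ rounds by \Cref{lemma:clusterincrease} (with the $O(d_i)$ contribution of \Cref{lemma:clusterdecrease} and the $O(1/\eps)$ contribution of the final \Cref{lemma:treeOpt} call being lower order), the total cost telescopes via the geometric sum
\[
\sum_{i=0}^{K} O(d_i\log^* n) \;=\; O(\log^* n)\cdot \sum_{i=0}^{K}(8/7)^{i} \;=\; O\!\left(\tfrac{\log^* n}{\eps}\right).
\]

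The main obstacle of the proof is exactly this balance. Invoking \Cref{lemma:clusterdecrease} with a constant $\delta$ of order $\eps$ (the natural choice to bring the diameter to $O(1/\eps)$ at every step) would cost $O(\log^*(n)/\eps)$ per iteration and hence $O(\log(1/\eps)\cdot\log^*(n)/\eps)$ overall, losing the desired bound by a $\log(1/\eps)$ factor; conversely, invoking it too rarely lets the cluster diameter blow up to $\poly(1/\eps)$. Scaling $\delta_i\propto \lambda_i$ resolves both pressures simultaneously: the total weight $\sum_i\delta_i$ added back to the intercluster set across all iterations is a geometric series dominated by the current remaining weight (which is why $\lambda_i$ still decays at rate $7/8$), and the diameters grow geometrically so that only the final iteration actually sees the target diameter $O(1/\eps)$.
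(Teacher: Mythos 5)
Your proof is correct and follows essentially the same route as the paper: starting from the trivial clustering, you interleave \Cref{lemma:clusterincrease} with \Cref{lemma:clusterdecrease} so that the intercluster weight decays geometrically while the cluster diameter grows geometrically, making the total cost a geometric sum dominated by its last term $O(\log^*(n)/\eps)$, and then finish with \Cref{lemma:treeOpt} and \Cref{lemma:clusteringIS}. The only difference is bookkeeping: the paper applies \Cref{lemma:clusterincrease} five times per phase to cut the weight by a factor $4$ before rebalancing with $\delta=\lambda/4$ (halving $\lambda$ per phase), whereas you apply it once per phase with $\delta=\lambda/8$ (shrinking $\lambda$ by $7/8$ per phase); both choices yield the same asymptotic bound.
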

\begin{proof}
  Consider some value $\lambda\in (0,1]$ and assume that we are given an $\big(O(1/\lambda), \lambda\big)$-clustering. By using \Cref{lemma:clusterincrease} $5$ times, we can compute a $\big(O(1/\lambda), \lambda/4\big)$-clustering in $O(1/\lambda\cdot\log^* n)$ rounds. When subsequently applying \Cref{lemma:clusterdecrease} with $\delta=\lambda/4$, we can turn this $\big(O(1/\lambda), \lambda/4\big)$-clustering into a  $(\lfloor 8/\lambda\rfloor, \lambda/2)$-clustering of $T$ in $O(1/\lambda)$ rounds. In $O(1/\lambda\cdot\log^* n)$ rounds, we can therefore turn a $\big(O(1/\lambda), \lambda\big)$-clustering into a $(\lfloor 8/\lambda\rfloor, \lambda/2)$-clustering of $T$. The trivial clustering where each node is a single cluster is a $(1, 1)$-clustering. We then run the described transformation for $\lambda_0, \dots,\lambda_{\lceil \log(1/\eps) \rceil}$, where $\lambda_i=2^{-i}$ to obtain an $\big(O(1/\eps), \eps\big)$-clustering of $T$. The overall time to achieve this is $O\big(\sum_{i=0}^{\lceil \log( 1/\eps) \rceil}1/\lambda_i\cdot\log^* n\big)=O(1/\eps\cdot\log^* n)$. The claim of the lemma now  follows from \Cref{lemma:clusteringIS} and \Cref{lemma:treeOpt}.
\end{proof}

Finally, we show tightness by proving the lower bound of \Cref{thm:mainTree} in the following.

\medskip

\subsection{Lower Bound}  The starting point of the lower bound proof are two existing proofs that any deterministic algorithm to compute a constant factor approximation to the MCIS problem in rings requires $\Omega(\log^* n)$ rounds. This fact was proven independently in \cite{lenzen08} and in \cite{CzygrinowHW08} (using two different methods). Because we need to prove our lower bound on oriented cycles, we also need the $\Omega(\log^* n)$ lower bound of \cite{lenzen08,CzygrinowHW08} for oriented paths (or equivalently oriented cycles). Actually, the proofs in both papers implicitly prove the bound for oriented paths, this is however not stated explicitly. For completeness, we therefore give a generic reduction from unoriented cycles to oriented paths next.

\begin{lemma}\label{lemma:orientedpaths_constant}
    Deterministically computing a constant approximation to the maximum cardinality independent problem in oriented $n$-node paths requires $\Omega(\log^* n)$ rounds in the \LOCAL model.
\end{lemma}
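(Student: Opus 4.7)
The plan is to prove the lemma by contradiction, reducing from the known $\Omega(\log^*n)$ lower bound for constant-approximation MCIS on unoriented cycles due to \cite{lenzen08,CzygrinowHW08}. I will assume there exists a deterministic \LOCAL algorithm $\mathcal{A}$ that computes a $c$-approximation of MCIS on oriented $n$-node paths in $r(n)=o(\log^* n)$ rounds for some absolute constant $c>0$, and from $\mathcal{A}$ build a deterministic \LOCAL algorithm $\mathcal{B}$ solving constant-approximation MCIS on unoriented $n$-node cycles in $r(n)+O(1)$ rounds, contradicting the cited lower bound.

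The algorithm $\mathcal{B}$ is a local simulation of $\mathcal{A}$. Given an unoriented cycle $C$ with unique IDs from $\poly(n)$, each node $v$ first gathers (in $r(n)$ rounds) the IDs and topology of its $r(n)$-hop ball $B_v$, which is an induced subpath of $C$ on $2r(n)+1$ nodes. Since $v$'s cycle neighbors are not a priori distinguished, $v$ then assigns $B_v$ a canonical orientation using a rule that depends only on the IDs seen locally (for concreteness: designate the endpoint of $B_v$ with smaller ID as the ``source'' of the virtual oriented path, and orient $B_v$ accordingly). Because $\mathcal{A}$'s decision at any node of an oriented path depends only on the $r(n)$-hop oriented neighborhood, $v$ can now simulate $\mathcal{A}$ at the center of this virtual oriented path and record the output bit $b_v\in\{0,1\}$. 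Finally, in $O(1)$ rounds of postprocessing $\mathcal{B}$ resolves any adjacency conflicts: if neighbors $u,w$ in $C$ both have $b=1$, the smaller-ID one defers and drops out of the IS.

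Validity of the resulting IS is immediate from postprocessing. The core of the argument is approximation quality: the final IS must still contain $\Omega(n)$ nodes, i.e., a constant fraction of the optimum $\lfloor n/2\rfloor$. Before postprocessing, the constant-approximation guarantee of $\mathcal{A}$ applied locally at each node produces at least $c\cdot\lfloor n/2\rfloor$ nodes with $b_v=1$. The only nodes eliminated by postprocessing are endpoints of ``conflict edges,'' i.e., edges $\{u,w\}\in E(C)$ on which the canonical orientations chosen by $u$ and $w$ disagree and which induce adjacent $1$-outputs; a bookkeeping argument on the canonical orientation rule (categorizing the $4$-ID patterns around each edge $\{u,w\}$) shows that the number of such conflict edges is at most a constant fraction less than the pre-postprocessing IS size, so the surviving IS has size $\Omega(n)$.

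The main obstacle I anticipate is precisely bounding the number of conflict edges under an arbitrary adversarial ID assignment and verifying that the loss is independent of the adversary's choice. I expect to handle this either by a direct case analysis of the canonical orientation rule on all constant-size ID patterns around each edge, or by replacing the naive ``smaller-ID source'' rule with a more robust rule (for example, simulating $\mathcal{A}$ under both possible orientations of each ball and combining the two candidate sets carefully) so that conflicts become trivially bounded. The remaining ingredients of the proof---namely the round complexity accounting ($r(n)+O(1)=o(\log^* n)$) and invoking \cite{lenzen08,CzygrinowHW08} to obtain the contradiction---are routine once the conflict bound is in place.
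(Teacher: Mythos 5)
The reduction direction is right, but there is a genuine gap at the heart of your argument: the claim that ``the constant-approximation guarantee of $\mathcal{A}$ applied locally at each node produces at least $c\cdot\lfloor n/2\rfloor$ nodes with $b_v=1$.'' The guarantee of $\mathcal{A}$ is a \emph{global} statement about its output on a single consistent oriented instance; it says nothing about which individual nodes output $1$, and hence nothing about the size of $\{v : b_v=1\}$ when each node simulates $\mathcal{A}$ on its own, privately oriented copy of its ball. Under an adversarial ID assignment your ``smaller-ID endpoint'' rule can make adjacent nodes disagree on the orientation essentially everywhere, so the bits $b_v$ are drawn from many mutually inconsistent instances of $\mathcal{A}$, and there is no reason their union is large (nor, for your fallback, that the intersection of the outputs under the two global orientations is large --- the two runs may select nearly disjoint halves of the cycle, and their union need not be independent). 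The conflict-edge bookkeeping, which you correctly flag as the open obstacle, cannot repair this, because the quantity it would be discounting from is itself unjustified. A secondary issue: $\mathcal{A}$ expects an oriented \emph{path}, and a cycle node simulating $\mathcal{A}$ ``at the center of'' its $(2r+1)$-node ball feeds $\mathcal{A}$ a path whose endpoints are visible within $r$ rounds, which is not the view of an interior node of an $n$-node path; this needs the standard ``only $O(r)$ nodes can distinguish'' argument, which you do not invoke.

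The paper sidesteps the consistency problem entirely by orienting \emph{edges} rather than balls: it first passes from unoriented cycles to unoriented paths (only $o(\log^* n)$ nodes can tell the two apart), then orients every edge of the path from smaller to larger endpoint ID. This is a single orientation that both endpoints of each edge agree on; it is not a consistent path orientation, but its defects are locally detectable nodes of indegree or outdegree $2$. Putting the $k$ indegree-$2$ nodes into the independent set and discarding their neighbors and the at most $k+1$ outdegree-$2$ nodes leaves a disjoint union of genuinely oriented paths on the remaining $n-3k-1$ nodes, on each of which $\mathcal{A}$'s global guarantee legitimately applies; either $k=\Omega(n)$ or the residual paths contribute $\Omega(n)$, so the combined set has size $\Omega(n)$. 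Replacing your per-ball canonical orientation with this per-edge orientation and defect-removal step is the missing idea.
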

\begin{proof}
    In \cite{lenzen08,CzygrinowHW08}, it is shown that $\Omega(\log^* n)$ rounds are needed to deterministically compute a constant approximation for MCIS in unoriented cycles. First note that the same lower bound
    also immediately holds in unoriented paths. In an $o(\log^* n)$-round algorithm, at most $o(\log^* n)$ nodes can distinguish being in a path from being in a cycle. Only so many nodes can therefore decide differently in the two cases. An algorithm to compute an $\Omega(n)$-sized independent set in paths can therefore also be used to compute such an independent set in cycles.

    Let us now show how to reduce the case of computing an $O(1)$-approximation for MCIS in unoriented paths to the case of computing such an approximation in oriented paths. Consider some unoriented $n$-node path $P$. We first arbitrarily orient every edge of $P$ (e.g., from smaller to larger ID). We then compute an independent set $I$ as follows. All nodes of indegree $2$ are added to $I$. All nodes that have a neighbor of indegree $2$ and all nodes that have outdegree $2$ are definitely outside $I$. The remaining nodes form non-adjacent oriented paths and we can therefore use a given independent set for oriented paths to extend $I$ to the rest of $P$. Let $k$ be the number of nodes of indegree $2$. There are at most $k+1$ nodes of outdegree $2$ and at most $2k$ nodes that are neighbors of a node of indegree $2$. In the first step, we therefore add $k$ nodes to $I$ and we have at most $3k+1$ nodes that are definitely not in $I$. Of the remaining $n-3k-1$ nodes, our oriented path approximation algorithm adds a constant fraction. Overall, we have $|I|=\Omega(n)$ as needed.
\end{proof}

Based on this, we can now prove the lower bound for computing a $(1-\eps)$-approximation in oriented paths.

\begin{lemma}\label{lemma:orientedpathslower}
    Let $\eps>\log(n)/n$ be a parameter. Every deterministic \LOCAL algorithm to compute a $(1-\eps)$-approximation for the maximum cardinality independent set problem in oriented paths requires $\Omega\big(\frac{\log^* n}{\eps}\big)$ rounds.
\end{lemma}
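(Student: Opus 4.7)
The plan is to prove this lower bound by reducing to Lemma~\ref{lemma:orientedpaths_constant}, which gives the $\Omega(\log^* N)$ lower bound for constant-factor approximation of MCIS on oriented paths. Given an oriented $N$-node path $P'$, I would construct an auxiliary oriented path $P$ by subdividing each edge of $P'$ into a path of length $2k+1$ (inserting $2k$ new internal nodes per edge), where $k = \lfloor c/\eps\rfloor$ for a sufficiently small constant $c > 0$. The resulting path $P$ has $n = N + 2k(N-1) = \Theta(N/\eps)$ nodes.

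Suppose for contradiction that a deterministic \LOCAL algorithm $A$ computes a $(1-\eps)$-approximation on oriented $n$-node paths in $T(n,\eps) = o(\log^*(n)/\eps)$ rounds. Each node $v_i$ of $P'$ deterministically generates unique $O(\log n)$-bit IDs for the $2k$ subdivision nodes it owns (in the segment between $v_i$ and $v_{i+1}$) and then simulates $A$ on $P$. Since each $P'$-edge corresponds to $2k+1$ edges in $P$, a $T$-round execution of $A$ on $P$ can be simulated in $\lceil T/(2k+1)\rceil = O(T\cdot \eps)$ rounds on $P'$. This produces an IS $I \subseteq V(P)$ with $|I|\ge (1-\eps)\cdot \mathsf{OPT}(P)$.

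The next step is to extract locally, in $O(1)$ additional rounds on $P'$, an IS $I'\subseteq V(P')$ with $|I'|=\Omega(N)$. Given this, the combined algorithm on $P'$ runs in $O(T\cdot\eps)+O(1)$ rounds. The assumption $\eps>\log(n)/n$ yields $n=\Theta(N/\eps)=N\cdot\mathrm{poly}(N)$, so $\log^* n=\Theta(\log^* N)$. Consequently $O(T\cdot\eps)+O(1)=o(\log^* N)$, contradicting Lemma~\ref{lemma:orientedpaths_constant} and forcing $T(n,\eps)=\Omega(\log^*(n)/\eps)$.

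The main obstacle is the extraction argument. I would examine, for each subdivided segment $S_i$ between $v_i$ and $v_{i+1}$, the contribution $|S_i\cap I|$ and note that the maximum IS size within $S_i$ decreases by at least one when both endpoints $v_i,v_{i+1}$ lie outside $I$ (compared to when at least one endpoint is in $I$). Summing $\sum_i|S_i\cap I|$ (using the double counting of shared endpoints $v_i$, so that $\sum_i|S_i\cap I|=|I|+m'$ where $m'=|I\cap\{v_1,\dots,v_{N-2}\}|$), and combining with the lower bound $|I|\ge(1-\eps)\mathsf{OPT}(P)$, the slack $\eps\cdot\mathsf{OPT}(P)=\Theta(cN)$ becomes small relative to $N$ for small constant $c$. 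A careful counting argument then shows that $m + r = \Omega(N)$, where $m=|\{v_i\in I\}|$ and $r$ is the number of maximal runs in the indicator sequence $(\mathbf{1}[v_i\in I])_{i=0}^{N-1}$. From this, a local $O(1)$-round rule (for instance, include $v_i\in I'$ whenever $v_i\in I$ and $v_{i-1}\notin I$, and supplement with selections inside long runs of $v\notin I$ using the segment-middle pattern as a consistent local tiebreaker) produces an IS on $P'$ of size $\Omega(N)$. The delicate part is verifying that the local rule both yields an IS on $P'$ and captures a constant fraction of $m+r$; this requires handling the case-analysis on whether $m$ itself or the number of runs $r$ dominates.
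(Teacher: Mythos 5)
Your overall reduction is the same as the paper's: subdivide each edge of an $N$-node oriented path into a path of length $2k+1$ with $k=\Theta(1/\eps)$, simulate the hypothetical $(1-\eps)$-approximation algorithm on the subdivided $n$-node path at a $\Theta(\eps)$ slowdown, and use the condition $\eps>\log(n)/n$ to get $N=\Omega(\log n)$ and hence $\log^* N=\Theta(\log^* n)$, contradicting the $\Omega(\log^* N)$ bound for constant approximation. That skeleton is fine.

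The gap is in the extraction step, which is exactly where the real work lies, and you have not actually carried it out. Two problems. First, your counting target is off: you argue toward $m+r=\Omega(N)$ (supernodes in $I$ plus number of runs), but a large number of runs does not by itself hand you an $\Omega(N)$-size independent set of the \emph{original} path via an $O(1)$-round rule --- your rule ``take $v_i$ if $v_i\in I$ and $v_{i-1}\notin I$'' yields only one node per run of in-$I$ supernodes, and the supplementary selection ``inside long runs of $v\notin I$'' is precisely a symmetry-breaking problem: a supernode deep inside a long run cannot learn its parity relative to the run boundary in $O(1)$ rounds, and the ``segment-middle pattern'' of $I$ is not a consistent tiebreaker, since a near-maximum independent set of a long subpath can shift its alternation phase at defect positions, so adjacent supernodes may disagree. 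You flag this yourself as ``the delicate part,'' but it is not a detail --- it is the theorem. Second, you never verify that the supernodes you do select from $I$ are non-adjacent in the original path; consecutive supernodes $v_i,v_{i+1}$ are at distance $2k+1$ in the subdivided path and can both lie in $I$.

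The paper avoids all of this with a cleaner local repair: whenever two consecutive supernodes $u,v$ are both in $S$, the $2k$ virtual nodes between them contain at most $k-1$ elements of $S$, so one removes $u$ and refills that segment alternately with $k$ virtual nodes, never decreasing $|S|$. After this $O(k)$-round transformation the supernodes of the resulting $S'$ already form an independent set of the original path, and a direct count ($|S'|\ge(1-\eps)n/2$ minus at most $(n-N+1)/2$ virtual nodes) gives $\Omega(N)$ supernodes once $2k+1\le 1/(2\eps)$. I'd recommend replacing your run-counting extraction with this shifting argument, or else actually proving that your local rule produces an independent set of the original path of size $\Omega(N)$.
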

\begin{proof}
    Assume that we have a $T(n)$-round algorithm $\calA$ to compute a $(1-\eps)$-approximate solution for the MCIS problem in oriented $n$-node paths. We define $k:=\lfloor 1/(4\eps)-1/2\rfloor$. Let $P=(V,E)$ be an oriented path on $N:=\lceil n/(2k+1)\rceil$ nodes. Our goal now is to show how we can use $\mathcal{A}$ to compute a constant approximate independent set in $P$ in time $O(1+\eps\cdot T(n))$. For the reduction, we first transform $P$ into a virtual path $P'$ on $n$ nodes. $P'$ is constructed as follows. Every directed edge $(u,v)$ of $P$ is subdivided $2k$ times to a directed path of length $2k+1$ (by adding $2k$ virtual nodes). The number of nodes of $P'$ after doing this is
    \[
    N + (N-1)\cdot 2k = \left(\frac{n}{2k+1}+x\right)\cdot (2k+1) - 2k \leq n,
    \]
    where $x:=\lceil n/(2k+1)\rceil-\frac{n}{2k+1}\in [0,2k/(2k+1)]$.\footnote{To see this intuitively, notice that in general  for any $x= \frac{P}{Q}$, where $P, Q \in \mathbb{N^*}$ and neither $P$ nor $Q$ is a multiple of the other, we need $Q-1$ many  $ \frac{1}{Q}$'s to add to  $\frac{P}{Q}$ until $P$ reaches the very next multiple of $Q$, thus $\lceil x \rceil -x \leq \frac{Q-1}{Q}$. Formally, for some $p \in \{1, \dots Q-1 \}$, $\lceil \frac{P}{Q} \rceil - \frac{P}{Q} =  \lceil \frac{cQ+ p}{Q} \rceil  - \frac{P}{Q} = \lceil c+ \frac{p}{Q} \rceil  - \frac{cQ+ p}{Q}= c+1 -c -  \frac{p}{Q} =  \frac{Q-p}{Q} \leq  \frac{Q-1}{Q} $.} Then, in order to obtain exactly $n$ nodes, a short path of virtual nodes of size at most $2k$ is inserted at the beginning of the path $P'$. The path $P'$ now consists of $N$ supernodes (the nodes of $P'$) and $n-N$ virtual nodes, where there are exactly $2k$ virtual nodes between any two super nodes.

    In order to compute a constant approximation to the MCIS problem on $P$, we proceed as follows. We first use $\calA$ to compute a $(1-\eps)$-approximate MCIS solution $S$ on $P'$. The independent set $S$ cannot be directly used as a solution on $P$ because $S$ can contain consecutive supernodes. In order to avoid this, we first transform $S$ into another independent set $S'$ on $P'$ such that $S'$ does not contain consecutive supernodes and such that $|S'|\geq|S|$. The transformation requires $2k+1=O(1/\eps)$ rounds on $P'$ and it works as follows. Consider two consecutive supernode $u$ and $v$ such that both $u$ and $v$ are in $S$. Assume that the edge $(u,v)$ in $P$ is oriented from $u$ to $v$ and thus the path of length $2k+1$ from $u$ to $v$ in $P'$ is also oriented from $u$ to $v$. Because $u,v\in S$ and $S$ is an independent set of $P'$, $S$ can contain at most $k-1$ of the $2k$ virtual nodes on the path from $u$ to $v$. We can therefore adapt the independent set by removing $u$ from it and instead adding the nodes on the path from $u$ to $v$ alternatingly, starting with the first node after $u$. In this way, the independent set contains $k$ virtual nodes between $u$ and $v$. We obtain the independent set $S'$ by doing this for any two consecutive supernodes that are both in $S$. In $T(n)+O(1/\eps)$ rounds on $P'$, we therefore obtain an independent set $S'$ of $P'$ that does not contain any adjacent supernodes and which $(1-\eps)$-approximates the MCIS problem on $P'$. Because $t$ rounds on $P'$ can be simulated by running $O(\eps\cdot t + 1)$ rounds $P$, the nodes on $P$ that simulate $P'$ can compute $S'$ in $O(\eps\cdot T(n) + 1)$ rounds.

    We next show that in order to get a constant approximation for MCIS on $P$, we can just use the independent set consisting of the supernodes on $P'$ that are in $S'$. Note that the size of an optimal independent set on $P'$ is $\lceil n/2\rceil$ and we therefore have $|S'|\geq (1-\eps)\cdot\frac{n}{2}$. The number of virtual nodes in $S'$ is at most $\lceil\frac{n-N}{2}\rceil\leq \frac{n-N+1}{2}$. The number of supernodes in $S'$ can therefore be lower bounded by
    \[
    (1-\eps)\cdot\frac{n}{2} - \frac{n-N+1}{2} =
    \frac{N}{2} - \frac{\eps n + 1}{2} \geq
    \frac{n}{4k+2} - \frac{\eps n + 1}{2} \geq
    \frac{\eps n}{2} - \frac{1}{2} = \Theta(N).
    \]
    The last inequality follows by plugging in $k\leq \frac{1}{4\eps}-\frac{1}{2}$. We therefore get a constant approximation to the MCIS problem in $P$ in time $O(\eps\cdot T(n)+1)$. By \Cref{lemma:orientedpaths_constant}, the time for computing a constant approximation for MCIS on $P$ must be $\Omega(\log^* N)$. Because we assumed that $\eps>\log(n)/n$, we know that $N=\Omega(\log n)$ and thus $\log^*N=\Omega(\log^*n)$. We therefore have $\eps\cdot T(n)+1=\Omega(\log^* n)$ and thus $T(n)=\Omega\big(\frac{\log^* n}{\eps}\big)$.
\end{proof}

\section{Approximating MWIS as a Function of the Arboricity}
\label{sec:arboricity}

In the following, we discuss our algorithms for approximating the MWIS problem as a function of the arboricity.

\subsection{Simple MWIS Approximation by Using Local Rounding}
\label{sec:arboricity_rounding}
In this section, we warm up by proving \Cref{thm:arboricity_rounding}, which follows by a simple application of the local rounding technique of \cite{Localrounding23}.

\begin{proof}\textit{(\Cref{thm:arboricity_rounding})}
In the context of \cite{Localrounding23}, fractional independent set solution $\vec{x}$ is an assignment $x_v\in [0,1]$ to each node $v$. The utility $\utility(\vec{x})$ and cost $\cost(\vec{x})$ of such a fractional solution $\vec{x}$ are defined as
    \[
    \utility(\vec{x}) := \sum_{v\in V} x_v\cdot w(v)
    \quad\text{and}\quad
    \cost(\vec{x}) := \sum_{\set{u,v}\in E}
    x_u\cdot x_v\cdot\min\set{w(u), w(v)}.
    \]
    Note that given a fractional independent set solution $\vec{x}$, there is a simple one-round randomized algorithm to compute an independent set $I$ of expected weight $\E[w(I)]\geq \utility(\vec{x})-\cost(\vec{x})$. The local rounding framework of \cite{Localrounding23} shows that an independent set of almost this quality can be computed efficiently by a deterministic algorithm. More precisely, Lemma 4.2 in \cite{Localrounding23} states the following. Assume that we are given a fractional independent set solution $\vec{x}$ for which $\utility(\vec{x})-\cost(\vec{x})\geq \frac{1}{2}\cdot\utility(\vec{x})$. Further assume that for some $K\geq 1$ and all $v\in V$, $x_v=0$ or $x_v\geq 1/K$. Then for any $\eps>0$, there is a deterministic \CONGEST algorithm to compute an independent set of weight at least $(1-\eps)\cdot\big(\utility(\vec{x})-\cost(\vec{x})\big)$ in $O\big(\frac{\log^2(K/\eps)}{\eps} + \log^* n\big)$ rounds.
    We can directly use Lemma 4.2 by assigning $x_v := 1/(2\beta)$ to all nodes.
    
    Note that since a graph of arboricity $\beta$ can be decomposed into $\beta$ forests, by orienting all trees in all forests towards some root, we can orient the edges of a graph of arboricity $\beta$ such that the outdegree of each node is at most $\beta$. Assume that we are given such an edge orientation of $G$ and for each node $v\in V$, let $N_{\mathrm{out}}(v)$ be the set of ($\leq \beta$) outneighbors of $v$. We have
    \begin{eqnarray*}
    \utility(\vec{x}) & = &
    \sum_{v\in V} x_v\cdot w(v) = \frac{1}{2\beta}\cdot w(V),\\
    \cost(\vec{x}) & = &
    \sum_{\set{u,v}\in E} x_u\cdot x_v \cdot \min\set{w(u),w(v)} \leq
    \sum_{v\in V}\sum_{u\in N_{\mathrm{out}}(v)}
    \frac{1}{(2\beta)^2}\cdot w(v) \leq 
    \frac{1}{4\beta}\cdot w(V).
    \end{eqnarray*}
    The claim of the theorem now directly follows from Lemma 4.2 in \cite{Localrounding23}.
\end{proof}

\subsection{Iterative Computation of Weighted Independent Sets}

\begin{algorithm}[t]
    \caption{Independent set composition for graph $G=(V,E)$ with node weights $w:V\to \Rp$} \label{alg:localratio}
    \begin{algorithmic}[1]
        \State Let $I_0\subseteq V$ be an arbitrary independent set of $G$
        \State Define node weights $w':V\to\Rp$ as
        \[
        w'(v) := \min\Bigg\{0, w(v) - \sum_{u\in I_0\cap N^+(v)}w(u)\Bigg\}
        \]
        \State Let $I'$ be an independent set of $G$ such that $w'(v)>0$ for all $v\in I'$
        \State \textbf{return} $I:= I' \cup \big(I_0 \setminus N^+(I')\big)$
    \end{algorithmic}
\end{algorithm}

Before we can prove \Cref{thm:carowei} in the upcoming section, we revisit a simple method to iteratively approximate the MWIS problem. The method is based on the local-ratio technique of \cite{localratio} and it has been explicitly described and used in the context of distributed algorithms for MWIS in \cite{Bar-YehudaCGS17,KawarabayashiKS20,Localrounding23}. \Cref{alg:localratio} gives a generic way to iteratively compose independent sets in a meaningful way in the weighted setting. Note that Line 3 of \Cref{alg:localratio} in particular implies that $I_0$ and $I'$ are disjoint because for all $v\in I_0$, we have $w'(v)=0$. The following lemma proves that \Cref{alg:localratio} returns an independent set of weight at least $w(I_0) + w'(I')$.

\begin{lemma}\label{lemma:localratio}
    When applying \Cref{alg:localratio} to a graph $G=(V,E)$ with node weights $w:V\to \Rp$, the weight of the returned independent set $I$ satisfies
    \[
    w(I) \geq w(I_0) + w'(I').
    \]
\end{lemma}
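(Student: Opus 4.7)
The plan is to show the inequality by directly decomposing $w(I)$ and then comparing the two pieces to $w(I_0)+w'(I')$ using the definition of $w'$.

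First I would check the basic structural facts needed: that $I$ is in fact an independent set and that $I_0$ and $I'$ are disjoint. For the latter, note that if $v\in I_0$ then $v\in I_0\cap N^+(v)$, so the sum in the definition of $w'(v)$ already contains the term $w(v)$, which forces $w'(v)\leq 0$; interpreting the $\min$ in the statement as the intended $\max$ with $0$, we get $w'(v)=0$, and since Line 3 requires $w'(v)>0$ for every $v\in I'$, we conclude $I_0\cap I' = \emptyset$. The independence of $I$ then follows because $I'$ is independent, the nodes of $I_0\setminus N^+(I')$ are pairwise non-adjacent (subset of $I_0$), and by construction no node of $I_0\setminus N^+(I')$ is adjacent to any node of $I'$.

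Using $I_0\cap I' = \emptyset$, I would then split the weight of $I$ as
\[
w(I) \;=\; w(I') + w\bigl(I_0\setminus N^+(I')\bigr) \;=\; w(I') + w(I_0) - w\bigl(I_0\cap N^+(I')\bigr).
\]
So the lemma reduces to the inequality $w(I') - w(I_0\cap N^+(I')) \geq w'(I')$. To prove this, I would sum the definition of $w'$ over $v\in I'$. By Line 2, for every $v\in I'$ we have $w'(v) \leq w(v) - \sum_{u\in I_0\cap N^+(v)} w(u)$, so
\[
w'(I') \;\leq\; w(I') \;-\; \sum_{v\in I'}\sum_{u\in I_0\cap N^+(v)} w(u).
\]

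The final step is to argue that this last double sum is at least $w(I_0\cap N^+(I'))$. This holds simply because each $u\in I_0\cap N^+(I')$ appears in the inner sum for at least one $v\in I'$ (namely any $v\in I'$ that witnesses $u\in N^+(I')$), and each contribution $w(u)\geq 0$. Combining this with the previous display gives $w(I') - w(I_0\cap N^+(I')) \geq w'(I')$, which together with the decomposition of $w(I)$ yields $w(I)\geq w(I_0)+w'(I')$. I do not anticipate any serious obstacle; the only subtle point is recognizing that nodes of $I_0\cap N^+(I')$ may in principle be double-counted in $\sum_{v\in I'}\sum_{u\in I_0\cap N^+(v)} w(u)$, which works in our favor since all weights are nonnegative and we only need a one-sided bound.
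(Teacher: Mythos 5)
Your proof is correct and follows essentially the same route as the paper's: decompose $w(I)=w(I_0)+w(I')-w(I_0\cap N^+(I'))$, bound $w(I_0\cap N^+(I'))$ by the double sum $\sum_{v\in I'}\sum_{u\in I_0\cap N^+(v)}w(u)$ using nonnegativity of weights, and identify the resulting expression with $w'(I')$ via the fact that $w'(v)>0$ for $v\in I'$. Your added checks (disjointness of $I_0$ and $I'$, independence of $I$, and the observation that the $\min$ in Line~2 of \Cref{alg:localratio} is a typo for $\max$) are all accurate.
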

\begin{proof}
    Consider some node $v\in I'$. Note that because we have $w'(v)>0$, we have
    \[
    w'(v) = w(v) - \sum_{u\in I_0\cap N^+(v)} w(u).
    \]
    We therefore have
    \begin{eqnarray*}
        w(I) & = &
        \sum_{v\in I_0} w(v) + \sum_{v\in I'} w(v) - \sum_{v\in I_0\cap N^+(I')} w(v)\\
        & \geq &
        \sum_{v\in I_0} w(v) + \sum_{v\in I'} w(v) - \sum_{v\in I'}\sum_{u\in N^+(v)\cap I_0} w(u)\\
        & = &
        \sum_{v\in I_0} w(v) +
        \sum_{v\in I'} \left[w(v) - \sum_{u\in I_0\cap N^+(v)} w(u)\right]\\
        & = & 
        w(I_0) + w'(I').
    \end{eqnarray*}
    This concludes the proof.
\end{proof}

We can use \Cref{lemma:localratio} to compute a sequence of independent sets $I_0, I_1, \dots, I_T$ that we can then combine to a single independent set. The following lemma states this explicitly.

\begin{lemma}\label{lemma:localratioseq}
    Let $G=(V,E)$ be a graph with node weights $w:V\to\Rp$ and for some integer $T\geq 0$, let $I_0, I_1, \dots,I_T$ be a sequence of $T+1$ independent sets and $w_0,w_1,\dots,w_T$ be a sequence of node weight functions that satisfy the following conditions. For all nodes $v\in V$, $w_0(v):=w(v)$. Further, for any $i\geq 1$ and all $v\in V$, we have
    \[
    w_i(v) := \max\Bigg\{0, w_{i-1}(v) - \sum_{u\in N^+(v)\cap I_{i-1}} w_{i-1}(u) \Bigg\}
    \]
    and the independent set $I_i$ only consists of nodes $u$ for which $w_i(u)>0$. Given the independent sets $I_0, I_1, \dots,I_T$, there is an $T$-round deterministic \CONGEST algorithm to compute an independent set $I$ of weight
    \[
    w(I) \geq \sum_{i=0}^T w_i(I_i).
    \]
\end{lemma}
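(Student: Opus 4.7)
My plan is to prove the lemma by induction on $T$, using \Cref{lemma:localratio} as the single combining primitive applied once per inductive step. In the base case $T=0$, I would simply output $I:=I_0$, which takes zero rounds and gives $w(I)=w_0(I_0)$ as required.

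For the inductive step, I would observe that the shifted sequences $I_1,\ldots,I_T$ and $w_1,\ldots,w_T$ satisfy exactly the same hypothesis as the lemma, just re-indexed: the recurrence defining $w_i$ from $w_{i-1}$ and $I_{i-1}$ is preserved, and by assumption each $I_i$ consists only of nodes $v$ with $w_i(v)>0$. Applying the induction hypothesis to this shifted sequence yields, in $T-1$ rounds, an independent set $\tilde I$ with $w_1(\tilde I)\ge\sum_{i=1}^T w_i(I_i)$. I would then invoke \Cref{alg:localratio} with ``$I_0$'' set to $I_0$ and ``$I'$'' set to $\tilde I$, using the weight function $w_0$. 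A direct substitution shows that the algorithm's modified weight function $w'$ is precisely $w_1$, and \Cref{lemma:localratio} then gives $w(I)\ge w_0(I_0)+w_1(\tilde I)\ge\sum_{i=0}^T w_i(I_i)$. The final combining step $I:=\tilde I\cup(I_0\setminus N^+(\tilde I))$ requires a single \CONGEST round (each node only needs to learn whether one of its neighbors lies in $\tilde I$), for a total of $(T-1)+1=T$ rounds.

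The one place that needs genuine verification, and what I expect to be the main obstacle, is the invariant that the inductively produced $\tilde I$ consists only of nodes $v$ with $w_1(v)>0$, since this is what allows the outermost application of \Cref{lemma:localratio} to be legitimate. I would establish this with a small sub-induction on the layer index: the recurrence $w_i=\max\{0,w_{i-1}-\cdots\}$ immediately forces $w_i(v)>0\Rightarrow w_{i-1}(v)>0$, hence iterating gives $w_1(v)>0$ for every $v$ belonging to any of the sets $I_1,\ldots,I_T$. The inductively constructed $\tilde I$ is always a subset of $I_1\cup\cdots\cup I_T$, because each combining step has the form $A\cup(B\setminus N^+(A))$ and never introduces new vertices, so the positivity invariant is preserved throughout the recursion. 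With this observation in hand, the remaining bookkeeping — the independent-set property at each level and the telescoping of the weight bound — is a direct unwinding of \Cref{lemma:localratio}.
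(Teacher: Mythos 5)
Your proof is correct and follows essentially the same route as the paper: the paper defines $\bar I_T:=I_T$ and $\bar I_i:=\bar I_{i+1}\cup(I_i\setminus N^+(\bar I_{i+1}))$ and telescopes \Cref{lemma:localratio}, which is exactly your induction on $T$ applied to the shifted sequence. If anything, your argument is slightly more complete, since you explicitly verify the positivity invariant $w_1(v)>0$ for all $v\in\tilde I$ needed to legitimize the outer application of \Cref{lemma:localratio}, a point the paper leaves implicit.
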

\begin{proof}
    We define $\bar{I}_T:= I_T$ and for each $i\in \set{0,\dots,T-1}$, we define $\bar{I}_i:= \bar{I}_{i+1}\cup \big(I_i \setminus N^+(\bar{I}_{i+1})\big)$. For each $i\in \set{0,\dots,T-1}$, \Cref{lemma:localratio} directly implies that 
    \[
    w_i(\bar{I}_i) \geq w_i(I_i) + w_{i+1}(\bar{I}_{i+1}).
    \]
    Applying this inductively, we thus get
    \[
    w_i(\bar{I}_i) \geq \sum_{j=i}^T w_j(I_j).
    \]
    Note that we have $w_T(\bar{I}_T)=w_T(I_T)$ because $\bar{I}_T=I_T$. By setting $I:=\bar{I}_0$, we thus have
    \[
    w(I) = w(\bar{I}_0) \geq
    \sum_{j=0}^T w_j(I_j),
    \]
    as required. Clearly, given $\bar{I}_{i+1}$, the set $\bar{I}_{i}$ can be computed in a single round of the \CONGEST model. The independent set $I$ can therefore be computed in $T$ rounds.
\end{proof}

\subsection{Approximating a Weighted Caro-Wei Bound}

We now have everything that we need to describe and analyze our algorithm to prove \Cref{thm:carowei} and thus approximate the weighted generalization of the Caro-Wei bound \eqref{eq:carowei}. The following statement is proven in \cite{Localrounding23}.

\begin{lemma}\label{lemma:Deltaapprox}\emph{(Theorem 1.3 of \cite{Localrounding23})}
    Let $G=(V,E)$ be a graph with node weight function $w:V\to\Rp$ and maximum degree $\Delta$. Assume further that $G$ is equipped with an initial proper $q$-coloring. For any $\eps>0$, there is a deterministic $O(\log^2\Delta\cdot\log(1/\eps) + \log^* q)$-round \CONGEST algorithm to compute an independent set of $G$ of weight at least $(1-\eps)\cdot w(V)/(\Delta+1)$.
\end{lemma}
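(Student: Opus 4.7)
My plan is to combine the local rounding framework of~\cite{Localrounding23} (Lemma~4.2, already applied above to prove \Cref{thm:arboricity_rounding}) with the iterative local-ratio composition developed in \Cref{lemma:localratio} and \Cref{lemma:localratioseq}.

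For the base step, I would take the uniform fractional independent set $x_v := 1/(2(\Delta+1))$ and orient every edge from its lower-weight to its higher-weight endpoint (breaking ties by ID). Then $\utility(\vec{x}) = w(V)/(2(\Delta+1))$, and accounting for each oriented edge at its tail bounds $\cost(\vec{x}) \leq \Delta\cdot w(V)/(2(\Delta+1))^2 \leq \utility(\vec{x})/2$. Invoking Lemma~4.2 of~\cite{Localrounding23} with $K = 2(\Delta+1)$ and a constant slack parameter then yields an independent set of weight $\Omega(w(V)/\Delta)$ in $O(\log^2\Delta + \log^* q)$ rounds. This falls a constant factor short of the Tur\'an-type bound $w(V)/(\Delta+1)$, so some form of iteration is needed.

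To close the remaining gap, I would iterate the base step $T$ times within the composition framework of \Cref{lemma:localratioseq}: at iteration $i$, apply the base step to the current residual weights $w_i$ (whose maximum degree is still at most $\Delta$) to obtain an independent set $I_i$, and combine the $I_i$'s via \Cref{lemma:localratioseq} into a single independent set of weight at least $\sum_{i<T} w_i(I_i)$. The relevant invariant is
\[
(\Delta+1)\sum_{i\leq t} w_i(I_i) + w_t(V) \geq w(V),
\]
which holds inductively since the weight update in \Cref{lemma:localratioseq} removes at most $(\Delta+1)\cdot w_i(I_i)$ of residual mass per step (each $u\in I_i$ absorbs $w_i(u)$ from itself and from each of its at most $\Delta$ neighbors). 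It therefore suffices to drive $w_T(V) \leq \eps\cdot w(V)$ to conclude $w(I) \geq \sum_i w_i(I_i) \geq (1-\eps)\cdot w(V)/(\Delta+1)$.

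The main obstacle is the rate of convergence: a naive per-step estimate gives only $w_i(I_i) \geq \Omega(w_i(V)/\Delta)$ and hence residual decrease of only a factor $1-\Omega(1/\Delta)$, which would force $\Omega(\Delta\log(1/\eps))$ iterations. To reach $T = O(\log(1/\eps))$ while keeping the per-iteration cost at $O(\log^2\Delta)$, I would apply the base step at iteration $i$ not to $w_i$ directly but to the degree-boosted weights $\tilde w_i(v) := w_i(v)\cdot(\deg(v)+1)$, so that the independent-set mass $\sum_{u\in I_i} w_i(u)(\deg(u)+1)$ returned now coincides with the full residual decrease rather than just $w_i(I_i)$. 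Showing that, together with a careful amortization of the rounding slack across iterations (as in~\cite{Localrounding23}), this boost indeed suffices to get $T=O(\log(1/\eps))$ is the heart of the argument. Putting everything together gives $O(\log(1/\eps))$ iterations of $O(\log^2\Delta)$ rounds each, plus a single $O(\log^* q)$-round preprocessing of the input coloring, for the claimed total of $O(\log^2\Delta\cdot\log(1/\eps) + \log^* q)$.
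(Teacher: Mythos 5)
You do not actually need to prove this statement: the paper imports it verbatim as Theorem~1.3 of \cite{Localrounding23} and gives no proof, so a citation is all that is expected here. That said, your reconstruction follows the route that \cite{Localrounding23} (building on \cite{KawarabayashiKS20}) is described as taking in the introduction: a constant-factor base step obtained by rounding the uniform fractional solution $x_v=1/(2(\Delta+1))$ via Lemma~4.2, composed over $T$ iterations through the local-ratio machinery of \Cref{lemma:localratioseq}. Your base step and your invariant $(\Delta+1)\sum_{i\le t}w_i(I_i)+w_t(V)\ge w(V)$ are both correct.

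The gap is exactly where you place it, and your proposed fix does not close it. To reach $T=O(\log(1/\eps))$ you must show that each iteration destroys a constant fraction of the residual weight, i.e.\ $w_{i+1}(V)\le(1-c)\,w_i(V)$. Running the base step on the boosted weights $\tilde w_i(v)=w_i(v)(\deg(v)+1)$ only guarantees that $\tilde w_i(I_i)=\sum_{u\in I_i}w_i(u)(\deg(u)+1)$ is large --- but this quantity is an \emph{upper} bound on the residual decrease $w_i(V)-w_{i+1}(V)$, not a lower bound: a node $u\in I_i$ of degree $\Delta$ all of whose neighbors carry zero residual weight contributes $(\Delta+1)w_i(u)$ to $\tilde w_i(I_i)$ while removing only $w_i(u)$ from the residual. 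So maximizing the boosted objective certifies nothing about the convergence rate; the only decrease you can extract is still $w_i(I_i)\ge \tilde w_i(V)/(c(\Delta+1)^2)$, which is no better than the naive bound. What is actually needed per iteration is a weighted-domination guarantee of the form $\sum_v\min\{w_i(v),\,w_i(N^+(v)\cap I_i)\}\ge c\cdot w_i(V)$, which is a different rounding target from maximizing independent-set weight (note that even a maximal independent set fails it when the MIS nodes are light). Establishing that guarantee is the substance of the argument in \cite{Localrounding23}, and it is absent from your sketch. (The degree-boosting idea you borrow is the one this paper uses in \Cref{algo:2beta+1approxMWIS}, but there it is paired with the Caro--Wei-type guarantee of \Cref{thm:carowei}, $\sum_{u\in I_0}\overline{w}(u)/(\deg_0(u)+1)=w(V_0)$, which is what makes it effective --- and which itself already presupposes the present lemma.)
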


Given the same prerequisites as in \Cref{lemma:Deltaapprox}, we next describe an algorithm to obtain an independent set satisfying the conditions of \Cref{thm:carowei}. First, we define $\eps':= \eps/2$ and note that we will be working with $\eps'$ instead of $\eps$ for the rest of the algorithm. Next, for some $T=O(\log_{1/(1-\eps')}\Delta)=O\big(\frac{\log\Delta}{\eps}\big)$, we define $T+1$ degree classes as follows.

\begin{itemize}
    \item Define integers $d_0=0 < d_1=1 < d_2< \dots < d_T=\Delta$ such that for $i\geq 1$, $\frac{d_i}{d_{i-1}+1} \leq \frac{1}{1-\eps'}$.
    \item Partition the nodes in $V$ into sets $V_0,V_1,\dots,V_T$ such that $V_0$ consists of the nodes of degree $d_0=0$ and for $i\geq 1$, $V_i:=\set{v \in V\,:\, \deg(v)\in [d_{i-1}+1,d_i]}$.
\end{itemize}

Now, we can start running the following procedure. We iteratively compute a sequence of independent sets $I_0, I_1, \dots, I_T$ and node weight functions $w_0, w_1, \dots, w_{T+1}$. Initially for all $v \in V$, let $w_0(v)=w(v)$ and $I_0=V_0$. Note that $I_0$ is an independent set of $G$ because the nodes in $V_0$ have degree $d_0=0$. For $i\geq 1$, we then do the following.

\begin{enumerate}
    \item Define node weights $w_i:V\to \Rp$ such that for all $v\in V$,
    \[
    w_i(v) := \max\Bigg\{0, w_{i-1}(v) - \sum_{u\in N^+(v)\cap I_{i-1}} w_{i-1}(u)\Bigg\}.
    \]
    \item Use \Cref{lemma:Deltaapprox} to compute an independent set $I_i\subseteq V_0\cup\dots\cup V_i$ such that for all $v\in I_i$, $w_i(v)>0$ and such that
    \[
    w_i(I_i) \geq (1-\eps')\cdot\frac{w_i(V_0\cup\dots\cup V_i)}{d_i+1}.
    \]
\end{enumerate}

\noindent The sets $I_0,\dots,I_T$ and weight functions $w_0,\dots,w_T$ satisfy the conditions of \Cref{lemma:localratioseq} and given $I_0,\dots,I_T$ we can therefore use \Cref{lemma:localratioseq} to compute an independent set $I$ of weight $w(I)=\sum_{i=1}^T w_i(I_i)$.

Based on the above algorithm, we next prove \Cref{thm:carowei}. The following is a slightly more general restatement of \Cref{thm:carowei}. Note that \Cref{thm:carowei} directly follows from \Cref{thm:carowei2} by using the unique $O(\log n)$-bit IDs as the initial proper coloring.

\begin{theorem}\label{thm:carowei2}
    Let $G=(V,E)$ be a graph with node weights $w:V\to\Rp$ and maximum degree $\Delta$. Assume further that $G$ is equipped with an initial proper $q$-coloring. For any $\eps>0$, there is a deterministic
    $O\big(\frac{\log(1/\eps)}{\eps}\cdot \log^3\Delta + \log^* q\big)$-round \CONGEST algorithm that computes an independent set $I$ of total weight
    \[
    w(I) \geq (1-\eps)\cdot\sum_{v\in V}\frac{w(v)}{\deg(v)+1}.
    \]
\end{theorem}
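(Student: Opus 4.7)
The plan is organized around the identity $w(I)\geq\sum_{i=0}^T w_i(I_i)$ provided by \Cref{lemma:localratioseq}, so the task reduces to lower bounding $\sum_i w_i(I_i)$. Setting $\eps':=\eps/2$, the first step is a purely arithmetic consequence of the degree-class construction: since $d_i/(d_{i-1}+1)\leq 1/(1-\eps')$, for every $v\in V_i$ we obtain $\deg(v)+1\geq(1-\eps')(d_{i(v)}+1)$, where $i(v)$ denotes the degree class of $v$. Consequently,
\[
\sum_{v\in V}\frac{w(v)}{\deg(v)+1}\;\leq\;\frac{1}{1-\eps'}\sum_{i=0}^T\frac{w(V_i)}{d_i+1}.
\]
It therefore suffices to show $\sum_i w_i(I_i)\geq(1-\eps')^2\sum_i\frac{w(V_i)}{d_i+1}$, since $(1-\eps')^3\geq 1-\eps$ for $\eps'=\eps/2$.

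The core quantitative fact is the per-iteration guarantee of \Cref{lemma:Deltaapprox}, applied to the induced subgraph $G[S_i]$ (where $S_i=V_0\cup\dots\cup V_i$ has maximum degree at most $d_i$) with weights $w_i$: this yields $w_i(I_i)\geq(1-\eps')w_i(S_i)/(d_i+1)$, which splits as $(1-\eps')w_i(V_i)/(d_i+1)+(1-\eps')w_i(S_{i-1})/(d_i+1)$. The plan is to use a telescoping potential
\[
\Phi_i\;:=\;\sum_{k=0}^{i-1}w_k(I_k)\;+\;(1-\eps')\sum_{\substack{v\in V\\ i(v)\geq i}}\frac{w_i(v)}{d_{\max(i,\,i(v))}+1},
\]
so that $\Phi_0=(1-\eps')\sum_v\frac{w(v)}{d_{i(v)}+1}\geq(1-\eps')^2 C$ and $\Phi_{T+1}=\sum_k w_k(I_k)$ (the potential term vanishes at $i=T+1$ because no node has $i(v)\geq T+1$). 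Showing $\Phi_{i+1}\geq \Phi_i$ then gives the desired lower bound $\sum_k w_k(I_k)\geq(1-\eps')^2 C\geq(1-\eps)C$.

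The increment $\Phi_{i+1}-\Phi_i$ picks up $w_i(I_i)$ from the new iteration and loses potential from two sources: (a)~the $V_i$ layer exits the ``future'' set, costing $(1-\eps')w_i(V_i)/(d_i+1)$; and (b)~the weight-loss term $L_i^{>}:=\sum_{v:\,i(v)>i}(w_i(v)-w_{i+1}(v))/(d_{i(v)}+1)$ arising from neighbors of $I_i$ in higher layers. Cost~(a) is directly matched by the $V_i$-portion of the \Cref{lemma:Deltaapprox} bound. Cost~(b) I would bound by $(1-\eps')w_i(I_i)$, using that each $u\in I_i$ has at most $\deg(u)\leq d_i$ neighbors in $V_{>i}$ and that by the geometric growth $d_{i+1}+1\geq(d_i+1)/(1-\eps')$, each such neighbor contributes at most $1/(d_{i+1}+1)\leq(1-\eps')/(d_i+1)$. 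The intended payment for (b) is the $S_{i-1}$-slack in the lemma's guarantee. The hardest step, and the main obstacle, is precisely to balance (a) and (b) simultaneously against $w_i(I_i)$: the analysis leans crucially on both the geometric degree-class growth (which makes the neighbor-sum in $L_i^{>}$ contractive) and on invoking \Cref{lemma:Deltaapprox} over $S_i$ rather than $V_i$ alone (which is what produces the $S_{i-1}$-slack).

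The round complexity is a straightforward multiplication: $T+1=O(\log\Delta/\eps)$ iterations, each calling \Cref{lemma:Deltaapprox} on $G[S_i]$ for $O(\log^2\Delta\cdot\log(1/\eps))$ \CONGEST rounds, plus a single $O(\log^*q)$ preprocessing for the initial proper coloring (reused across iterations), yielding $O\bigl(\tfrac{\log(1/\eps)}{\eps}\cdot\log^3\Delta+\log^*q\bigr)$ rounds as claimed.
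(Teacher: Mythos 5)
Your overall skeleton matches the paper's: same algorithm (degree classes with ratio $1/(1-\eps')$, iterated application of \Cref{lemma:Deltaapprox} on the prefix $S_i=V_0\cup\dots\cup V_i$, composition via \Cref{lemma:localratioseq}), and the peripheral steps — the round-complexity count and the conversion $\sum_i w(V_i)/(d_i+1)\geq(1-\eps')\sum_v w(v)/(\deg(v)+1)$ — are fine. But the central step, the monotonicity $\Phi_{i+1}\geq\Phi_i$ of your potential, does not close; you flag it as the main obstacle but do not resolve it, and as stated it fails. Unwinding your definition, the step-$i$ requirement is
\[
w_i(I_i)\;\geq\;(1-\eps')\,\frac{w_i(V_i)}{d_i+1}\;+\;(1-\eps')\,L_i^{>}.
\]
The only lower bound available on the gain is the guarantee $w_i(I_i)\geq(1-\eps')\,w_i(S_i)/(d_i+1)$ of \Cref{lemma:Deltaapprox}, which may hold with equality, and the $S_{i-1}$-slack can be zero (e.g., $w_1(V_0)=0$ always, since $I_0=V_0$). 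Then the gain exactly equals cost (a), and nothing is left for cost (b), which is strictly positive whenever $I_i$ has neighbors in higher layers and can be as large as $(1-\eps')^2\frac{d_i}{d_i+1}\,w_i(I_i)$ — essentially the entire gain ($d_i$ neighbors per $u\in I_i$, each charged up to $w_i(u)/(d_{i+1}+1)\approx(1-\eps')\,w_i(u)/(d_i+1)$). If one repairs this by discounting the future-credit term by a constant $c$, the per-step inequality forces $c\lesssim 1/2$, so this style of local charging only recovers the $(1/2-\eps)$-fraction of the Caro-Wei bound already known from \cite{Localrounding23}, not the $(1-\eps)$-fraction claimed.

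The paper escapes this with a structurally different accounting. Its potential is $\Phi_i=\sum_{j\leq i}w_j(I_j)\,(d_j+1)$ — the gains \emph{inflated} by $(d_j+1)$ — and the destroyed weight is charged globally rather than per step: each $v\in I_j$ destroys at most $w_j(v)(\deg(v)+1)\leq w_j(v)(d_j+1)$ of total weight, so $w_i(S_i)\geq w(S_i)-\Phi_{i-1}$ and by induction $\Phi_i\geq(1-\eps')\,w(S_i)$. A summation by parts, $\sum_i w_i(I_i)=\frac{\Phi_T}{d_T+1}+\sum_{i<T}\big(\frac{1}{d_i+1}-\frac{1}{d_{i+1}+1}\big)\Phi_i$, whose coefficients are nonnegative, then deflates the inflated potential back into $(1-\eps')\sum_i\frac{w(V_i)}{d_i+1}$. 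Two smaller issues in your write-up: $(1-\eps')^3\geq1-\eps$ is false for $\eps'=\eps/2$ and small $\eps$ (use $\eps'=\eps/3$, say); and the construction only guarantees $d_{i+1}\leq(d_i+1)/(1-\eps')$, not the reverse inequality $d_{i+1}+1\geq(d_i+1)/(1-\eps')$ that your bound on $L_i^{>}$ relies on (fixable by choosing the $d_i$ maximal, but it needs to be said).
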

\begin{proof}
    The independent set $I$ is computed in the following way. As a first step, we use the algorithm of \cite{linial92} to compute a proper $O(\Delta^2)$-coloring of $G$ in $O(\log^* q)$ rounds. We then compute the independent sets $I_0,\dots,I_T$ and the weight functions $w_0,\dots,w_T$ as described above. We then use \Cref{lemma:localratioseq} to compute an independent set $I$ of total weight
    \begin{equation}\label{eq:caroweiTotal}
        w(I) \geq \sum_{i=0}^T w_i(I_i).
    \end{equation}
    We first analyze the time complexity of the algorithm. The computations of the independent sets $I_0,\dots,I_T$ requires to run the algorithm of \Cref{lemma:Deltaapprox} $T=O(\log(\Delta)/\eps)$ times. Given independent set $I_{i-1}$, the weight function $w_i$ can be computed in a single round of the \CONGEST model. The overall time to compute the independent sets $I_0,\dots,I_T$ and the weight functions $w_0,\dots,w_T$ is therefore
    \[
    T\cdot O\big(\log^2\Delta\cdot\log(1/\eps) + \log^* \Delta\big) =
    O\left(\frac{\log(1/\eps)}{\eps}\cdot \log^3\Delta\right).
    \]
    Since by \Cref{lemma:localratioseq}, the final independent set $I$ can then be computed in another $O(T)$ rounds, together with the $O(\log^* q)$ rounds required to compute the proper $O(\Delta^2)$-coloring, the claimed time complexity follows.
    
    Let us therefore move our attention to proving the desired lower bound on the weight of the computed independent set $I$. To do so, we define for all $i\in \{0 \dots T\}, \Phi_i := \sum_{j=0}^{i} w_j(I_j)\cdot (d_j+1)$ and we first prove the following inequality: 
    \begin{equation}\label{ineq:phi}
        \Phi_{i} \geq (1-\eps') \cdot \sum_{j=0}^i w(V_j).
    \end{equation}
    For $i=0$, we have $\Phi_0= w_0(I_0)= w(V_0) \geq  (1-\eps') \cdot w(V_0)$. For $i>0$, we have 
    \begin{eqnarray*}
    \Phi_i & = & \Phi_{i-1} + w_i(I_i)\cdot (d_i+1)\\
    & \geq & \Phi_{i-1} +(1-\eps') \cdot \frac{w_i(V_0 \cup \dots\cup V_i)}{d_i+1}\cdot (d_i+1).
    \end{eqnarray*}
    By using the definition of the weights $w_i(v)$, we therefore get
    \begin{eqnarray*}
    \Phi_i & \geq & \Phi_{i-1} +(1-\eps') \cdot w_i(V_0 \cup \dots\cup V_i)\\
    & \geq & \Phi_{i-1} +(1-\eps') \cdot w(V_0 \cup \dots\cup V_i) - (1-\eps') \cdot \sum_{j=0}^{i-1} \sum_{v \in I_j} w_j(v) \cdot(\deg(v) +1) \\
    & \geq & \Phi_{i-1} +(1-\eps') \cdot w(V_0 \cup \dots\cup V_i) - (1-\eps') \cdot \sum_{j=0}^{i-1} \sum_{v \in I_j} w_j(v) \cdot(d_j +1) \\
    & = & \Phi_{i-1} +(1-\eps') \cdot w(V_0 \cup \dots\cup V_i) - (1-\eps') \cdot \sum_{j=0}^{i-1} w_j(I_j) (d_j +1) \\
    & = & \Phi_{i-1} +(1-\eps') \cdot w(V_0 \cup \dots\cup V_i) - (1-\eps') \cdot  \Phi_{i-1}\
     \geq \ (1-\eps') \cdot \sum_{j=0}^{i} w(V_j).
    \end{eqnarray*}
    Finally, we prove the desired bound in the following.
    \begin{eqnarray*}
    w(I) & \stackrel{\eqref{eq:caroweiTotal}}{\geq} & \sum_{i=0}^T w_i(I_i)\\
    & = & w_0(I_0) + \sum_{i=1}^T \frac{\Phi_i - \Phi_{i-1}}{d_i+1}\\
    & = &  \frac{\Phi_{T}}{d_T+1} + \sum_{i=0}^{T-1} \frac{\Phi_i}{d_i+1} - \sum_{i=0}^{T-1}\frac{\Phi_i}{d_{i+1}+1}\\
    & = &  \frac{1}{d_T+1} \cdot \Phi_{T} + \sum_{i=0}^{T-1} \left(\frac{1}{d_i+1}-\frac{1}{d_{i+1}+1}\right)\cdot \Phi_i \\
    &\stackrel{(\ref{ineq:phi})}{\geq} & (1-\eps') \cdot \left( \frac{1}{d_T+1}\cdot \sum_{i=0}^T w(V_i) + \sum_{i=0}^{T-1} \left(\frac{1}{d_i+1}-\frac{1}{d_{i+1}+1}\right)\cdot \sum_{j=0}^{i} w(V_j)\right).
    \end{eqnarray*}
    The term in the large bracket on the right-hand-side can be written as
    \[
    \frac{1}{d_0+1}\cdot w(V_0) + \sum_{i=1}^{T}\cdot\left[
    \frac{1}{d_i+1}\left(\sum_{j=0}^i w(V_j) - \sum_{j=0}^{i-1} w(V_j)\right)
    \right] =
    \sum_{i=0}^T \frac{w(V_i)}{d_i+1}.
    \]
    We thus have
    \[
    w(I) \geq (1-\eps')\cdot \sum_{i=0}^T \frac{w(V_i)}{d_i+1} \geq
    (1-\eps')\cdot \left(w(V_0) + \sum_{i=1}^T\sum_{v\in V_i}
    \frac{(1-\eps')\cdot w(v)}{d_{i-1}+2}\right).
    \]
    In the second inequality, we use the fact that for $i\geq 1$, $d_i/(d_{i-1}+1)\leq 1/(1-\eps')$ and thus $d_i+1 \leq (d_{i-1}+2)/(1-\eps')$. By using that for $v\in V_0$, we have $\deg(v)=0$ and for $v\in V_i$ and $i\geq $, we have $\deg(v)\geq d_{i-1}+1$, we then get
    \[
    w(I) \geq
    (1-\eps')^2\cdot\sum_{v\in V}\frac{w(v)}{\deg(v)+1} \geq
    (1-\eps)\cdot \sum_{v\in V}\frac{w(v)}{\deg(v)+1}.
    \]
    The last inequality uses that $(1-\eps')^2\geq 1-2\eps'=1-\eps$. This concludes the proof.
\end{proof}

\subsection{Better Approximation as a Function of the Arboricity}

\begin{algorithm}[t]\caption{Recursive alg.\ to compute ind.\ set on graph $G$ of arboricity $\beta$ and with weights $w(v)$}\label{algo:2beta+1approxMWIS}
    \begin{algorithmic}[1]
		\Function{ApproxMWIS}{$G$, $w$}
        \State $V_0 := \set{v\in V\,:\,\deg(v)\leq A}$
        \For{\textbf{all} $v \in V_0$} 
        \State Define $\overline{w}(v):= w(v)\cdot(\deg_0(v)+1)$, where $\deg_0(v)$ is the degree of $v$ in $G[V_0]$
        \EndFor
        \State Run alg.\ of Thm.\ \ref{thm:carowei} with param.\ $\eps>0$ on $G[V_0]$ with weights  $\overline{w}(v)$ to get ind.\ set $I_0$
        \For{\textbf{all} $v \in V_0$} 
        \State Define $w'(v) := \max\set{0, w(v) - \sum_{u\in N^+(v)\cap I_{0}} w(v)}$
        \EndFor
        \State Ind.\ set $I':=$ \textsc{ApproxMWIS}$\big(G[V\backslash V_0],w'\big)$
        \State Remove nodes $v$ with $w'(v)=0$ from $I'$
        \State \Return $I:=I' \cup (I_0 \backslash N^+(I'))$
        \EndFunction
	\end{algorithmic}		
\end{algorithm}

We now get to our algorithm for computing an independent set of weight at least $(1-\eps)\cdot w(V)/(2\beta+1)$ (\Cref{thm:arboricity}). For a given parameter $\eps>0$, we define $A:=\lfloor (2+\eps)\beta\rfloor$. Because the average degree of a graph of arboricity $\beta$ is less than $2\beta$, at least an $\Omega(\eps)$-fraction of all the nodes have degree at most $A$. We now define the algorithm recursively. We first compute an independent set $I_0$ of the nodes $V_0$ of degree at most $A$. We then use \Cref{lemma:localratio} to adjust the weights of the nodes in $V\setminus V_0$ and we recursively compute an independent set on those nodes. The two independent sets are then combined by using \Cref{lemma:localratio}. The details of the algorithm are given by \Cref{algo:2beta+1approxMWIS}. We first analyze the round complexity of \Cref{algo:2beta+1approxMWIS}.    

\begin{lemma}\label{lemma:2beta_timecomplexity}
    Let $G=(V,E)$ be a node-weighted graph of arboricity $\beta$. \Cref{algo:2beta+1approxMWIS} can be implemented in $O\big(\frac{\log^3(\beta)\cdot\log(1/\eps)\cdot\log n}{\eps^2}\big)$ rounds in the (deterministic) \CONGEST model.
\end{lemma}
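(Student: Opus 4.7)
The plan is to bound the total round complexity as the product of the recursion depth and the per-level cost of \Cref{algo:2beta+1approxMWIS}, together with a one-time coloring preprocessing step to absorb the $\log^* n$ overhead.

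First I would bound the recursion depth. Since $G$ has arboricity $\beta$, its sum of degrees is at most $2\beta(n-1) < 2\beta n$, so by Markov's inequality the number of nodes with degree greater than $A = \lfloor(2+\eps)\beta\rfloor$ is strictly less than $2\beta n/(A+1) \leq 2n/(2+\eps)$. Hence $|V_0| \geq \eps n/(2+\eps) = \Omega(\eps n)$, and each recursive call removes at least an $\Omega(\eps)$-fraction of the currently remaining nodes. Because arboricity is hereditary, the same Markov estimate applies in every recursive call, so the recursion terminates after at most $O(\log(n)/\eps)$ levels.

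For the per-level cost, observe that $G[V_0]$ has maximum degree at most $A = O(\beta)$, so a single invocation of \Cref{thm:carowei} on $G[V_0]$ runs in $O\big(\log^3(\beta)\log(1/\eps)/\eps + \log^* n\big)$ rounds. All other operations of a recursive call — identifying $V_0$, forming the modified weights $\overline{w}(v) = w(v)(\deg_0(v)+1)$ and $w'(v) = \max\{0, w(v) - \sum_{u \in N^+(v) \cap I_0} w(u)\}$, zeroing nodes out, and producing $I := I' \cup (I_0 \setminus N^+(I'))$ — are local and take $O(1)$ rounds each. To avoid paying an additive $\log^* n$ at every level, I would instead invoke the coloring-aware variant \Cref{thm:carowei2}: at the start of the outermost call run Linial's algorithm once to get a proper $O(\Delta^2)$-coloring of $G$ in $O(\log^* n)$ rounds, then pass the restriction of this coloring into every recursive invocation, which reduces the per-invocation coloring overhead to $O(\log^*\beta)$.

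Putting the two bounds together gives a total round count of $O(\log^* n) + O(\log(n)/\eps) \cdot O\big(\log^3(\beta)\log(1/\eps)/\eps + \log^*\beta\big) = O\big(\log^3(\beta)\log(1/\eps)\log(n)/\eps^2\big)$, matching the claim. I expect the main obstacle to be the bookkeeping of the \CONGEST simulation of the recursion on $G[V \setminus V_0]$: each node has to determine which of its neighbors remain in the current subproblem, propagate its residual weight $w'$ after $I_0$ is chosen, and (for nodes in $V_0$) learn $\deg_0(v)$ before the call to \Cref{thm:carowei2}. Each of these pieces of information is obtainable in a single round of local communication, so this subtlety does not affect the asymptotic bound.
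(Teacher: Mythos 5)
Your overall structure matches the paper's proof: the recursion depth is bounded by $O(\log(n)/\eps)$ via the average-degree argument ($|V\setminus V_0|\cdot(2+\eps)\beta\le 2\beta n$, applied hereditarily at every level), the per-level cost is dominated by one invocation of \Cref{thm:carowei} on a graph of maximum degree $A=O(\beta)$, and the additive $\log^*$ overhead is killed by computing a single proper coloring up front and feeding it to the coloring-aware version (\Cref{thm:carowei2}) at every level. All of this is correct and is exactly what the paper does.

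There is, however, one concrete flaw in your precoloring step. Linial's algorithm gives a proper coloring of $G$ with $O(\Delta^2)$ colors, where $\Delta$ is the maximum degree of the \emph{whole} graph $G$, which can be as large as $n-1$ regardless of $\beta$. Restricting that coloring to $G[V_0]$ does not shrink the palette, so the $\log^* q$ term in \Cref{thm:carowei2} becomes $\log^*(O(\Delta^2))=O(\log^*\Delta)=O(\log^* n)$ per invocation, not $O(\log^*\beta)$ as you claim. Multiplied by the $O(\log(n)/\eps)$ recursion levels this contributes an extra $O\big(\frac{\log(n)\cdot\log^* n}{\eps}\big)$ term, which is \emph{not} always dominated by $\frac{\log^3(\beta)\log(1/\eps)\log n}{\eps^2}$ (take $\beta$ and $\eps$ constant: your bound degrades to $O(\log n\cdot\log^* n)$ versus the claimed $O(\log n)$). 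The paper avoids this by instead computing a proper $O(\beta^2)$-coloring using the arboricity-based coloring algorithm of Barenboim and Elkin, which runs in $O(\log n)$ rounds in \CONGEST --- affordable here since the target bound already carries a $\log n$ factor --- and makes the per-invocation overhead genuinely $O(\log^*\beta)$, which is swallowed by the $\log^3\beta$ term. Replacing Linial by that coloring is the only change your argument needs.
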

\begin{proof}
    As a first step of the algorithm, we use an algorithm from \cite{barenboim08} to compute a proper coloring of $G$ with $O(\beta^2)$ colors. By Theorem 5.5 in \cite{barenboim08}, such a coloring can be computed deterministically in $O(\log n)$ rounds. The paper does not explicitly study the required message size, it is however not hard to see that the algorithm directly works in the \CONGEST model.

    To analyze the time complexity of \Cref{algo:2beta+1approxMWIS}, we need to analyze the time for a single instance of the algorithm and we need to analyze the depth of the recursion. Let us first analyze the time for running \Cref{algo:2beta+1approxMWIS} while ignoring the recursive call in Line 8. Everything except running the algorithm of \Cref{thm:carowei} in Line 5 can clearly be implemented in $O(1)$ rounds. The time for Line 5 is
    $O\big(\frac{\log(1/\eps)}{\eps}\cdot \log^3 \beta\big)$.
    When applying \Cref{thm:carowei}, we use that the maximum degree in $G[V_0]$ is $\leq A=O(\beta)$ and that the graph is properly colored with $O(\beta^2)$ colors and thus the $\log^* q$ term in the bound of \Cref{thm:carowei} becomes negligible.

    To bound the number of recursion levels, upper bound the number of nodes of the graph $G[V\setminus V_0]$ to which the algorithm is applied recursively. Since the average degree of $G$ is at most $2\beta$ and all nodes in $V\setminus V_0$ have degree at least $(2+\eps)\beta$, we know that
    \begin{equation}\label{eq:bound_withoutrecursion}
        |V\setminus V_0|\cdot (2+\eps)\beta \leq 2\beta\cdot n
    \end{equation}
    and thus $|V\setminus V_0|\leq \frac{2}{2+\eps}\cdot n = (1-\Theta(\eps))\cdot n$. The number of nodes therefore decreases by a factor $(1-\Theta(\eps))$ in each level of the recursion and the depth of the recursion is thus at most $O\big(\frac{\log n}{\eps}\big)$. Multiplying this with the bound for Line 5 yields the time complexity that is claimed by the lemma.
\end{proof}

It remains to lower bound the weight of the independent set computed by \Cref{algo:2beta+1approxMWIS}. Recall that we defined $A:=\lfloor (2+\eps)\beta\rfloor$.

\begin{lemma}\label{lemma:approxRatio_arboricity}
    Let $G=(V,E)$ be a graph of arboricity $\beta$ and with node weights $w:V\to\Rp$. When applied to $G$ with parameter $\eps'>0$, \Cref{algo:2beta+1approxMWIS} returns an independent set $I$ of weight
    \[
    w(I) \geq (1-\eps')\cdot\frac{w(V)}{A+1} \geq \frac{1-\eps'}{1+\eps'/2}\cdot\frac{w(V)}{2\beta+1}
    >(1-2\eps')\cdot \frac{w(V)}{2\beta+1}.
    \]
\end{lemma}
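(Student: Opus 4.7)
The plan is to prove the main inequality $w(I) \geq (1-\eps')w(V)/(A+1)$ by induction on $|V|$, noting that the recursion is finite by \Cref{lemma:2beta_timecomplexity}. The base case $V = \emptyset$ is vacuous. For the inductive step I would assemble three ingredients. First, \Cref{thm:carowei} applied to $G[V_0]$ with the rescaled weights $\overline{w}$ produces $I_0$ with $\overline{w}(I_0) \geq (1-\eps')w(V_0)$; the identity $\overline{w}(v)/(\deg_0(v)+1) = w(v)$ is exactly what makes the Caro-Wei sum on $V_0$ collapse to $w(V_0)$, which is the reason for the specific choice of $\overline{w}$ in \Cref{algo:2beta+1approxMWIS}. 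Second, \Cref{lemma:localratio} applied to the combination $I = I' \cup (I_0 \setminus N^+(I'))$ in Line 11 yields $w(I) \geq w(I_0) + w'(I')$. Third, the inductive hypothesis on the recursive call on $G[V \setminus V_0]$ (which is valid because subgraphs inherit arboricity at most $\beta$, so the corresponding threshold $A'$ satisfies $A'+1 \leq A+1$) gives $w'(I') \geq (1-\eps') w'(V \setminus V_0)/(A+1)$.

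The real algebraic content is bounding the weight lost when passing from $w$ to $w'$ on $V \setminus V_0$. For each $v \in V \setminus V_0$ we have $v \notin I_0$ (since $I_0 \subseteq V_0$), hence $w(v) - w'(v) \leq \sum_{u \in N(v) \cap I_0} w(u)$, and swapping the order of summation gives
\[
w(V \setminus V_0) - w'(V \setminus V_0) \leq \sum_{u \in I_0} w(u)\bigl(\deg(u) - \deg_0(u)\bigr) \leq \sum_{u \in I_0} w(u)\bigl(A - \deg_0(u)\bigr),
\]
using $\deg(u) \leq A$ for $u \in V_0$. Plugging this into $(A+1)w(I) \geq (A+1)w(I_0) + (1-\eps')w'(V \setminus V_0)$ and regrouping the terms by $u \in I_0$ produces coefficients of the form $(A+1) - (1-\eps')(A - \deg_0(u))$, which an elementary manipulation shows to equal $\eps'(A+1) + (1-\eps')(\deg_0(u)+1) \geq (1-\eps')(\deg_0(u)+1)$; the sum then evaluates to at least $(1-\eps')w(V \setminus V_0) + (1-\eps')\overline{w}(I_0) \geq (1-\eps')w(V \setminus V_0) + (1-\eps')^2 w(V_0) \geq (1-\eps')^2 w(V)$. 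Up to the standard reparametrization $\eps' \leftarrow \eps'/2$ in the inner call to \Cref{thm:carowei} (so that $(1-\eps'/2)^2 \geq 1-\eps'$), this yields the first inequality $w(I) \geq (1-\eps') w(V)/(A+1)$.

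The remaining two inequalities are short algebra. Since $A = \lfloor(2+\eps')\beta\rfloor \leq (2+\eps')\beta$, we have $A+1 \leq (2\beta+1) + \eps'\beta \leq (1+\eps'/2)(2\beta+1)$, using $\beta/(2\beta+1) < 1/2$, which gives the middle inequality; and $(1-\eps')/(1+\eps'/2) > 1-2\eps'$ cross-multiplies to the obvious $1-\eps' > 1 - 3\eps'/2 - \eps'^2$. The main obstacle I anticipate is coordinating the two different notions of weight — the rescaled $\overline{w}$ on $V_0$ and the local-ratio residual $w'$ on $V \setminus V_0$ — so that the two $(1-\eps')$ factors (one from \Cref{thm:carowei}, one from the induction) compose into a single $(1-\eps')$ factor without further loss. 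The reason this closes is that the accounting of $w(V \setminus V_0) - w'(V \setminus V_0)$ by $\sum_{u \in I_0} w(u)(\deg(u) - \deg_0(u))$ is tight enough that the slack picked up in the regrouping step exactly reconstitutes $(1-\eps')\overline{w}(I_0)$, and the identity $\overline{w}(I_0) \geq (1-\eps')w(V_0)$ then closes the loop.
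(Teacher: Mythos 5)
Your proposal is correct and follows essentially the same route as the paper: induction on $|V|$, the local-ratio composition $w(I)\geq w(I_0)+w'(I')$, the charge $w(V\setminus V_0)-w'(V\setminus V_0)\leq\sum_{u\in I_0}w(u)(A-\deg_0(u))$, and the collapse of the rescaled Caro--Wei sum via $\overline{w}(v)/(\deg_0(v)+1)=w(v)$. The only difference is at the regrouped coefficient: the paper bounds $\eps'(A+1)+(1-\eps')(\deg_0(u)+1)\geq \deg_0(u)+1$ (using $A\geq\deg_0(u)$) rather than discarding the first term, which keeps the full $\overline{w}(I_0)$ and yields a single $(1-\eps')$ factor, so no reparametrization is needed and the induction hypothesis closes exactly as stated; with your weaker bound you must carry the $(1-\eps')^2$ guarantee through the induction itself (not just rescale the inner call), which works but is slightly more awkward.
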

\begin{proof}
    We prove the lemma by induction on the number of nodes of $G$. Note that if $G$ has at most $A+1$ nodes, clearly all nodes in $G$ have degree at most $A$ and thus $V_0=V$. In this case, the algorithm just returns the set $I_0$ that is computed in Line 5 of \Cref{algo:2beta+1approxMWIS}. If the maximum degree in $G$ is at most $A$ and we thus have $V_0=V$, \Cref{thm:carowei} implies that the independent set $I_0$ computed in Line 5 has weight
    \[
    w(I_0) = \sum_{v\in I_0} w(v) \geq
    \sum_{v\in I_0} \frac{\overline{w}(v)}{A+1} \geq 
    \frac{1-\eps'}{A+1}\cdot\sum_{v\in V} \frac{\overline{w}(v)}{\deg(v)+1} =
    (1-\eps')\cdot \frac{w(V)}{A+1},
    \]
    as required. This proves the base case of the induction.

    For the general case, we use \Cref{lemma:localratio} to lower bound the weight of the returned independent set $I$. \Cref{algo:2beta+1approxMWIS} fits exactly to the framework of \Cref{lemma:localratio}. We therefore get that
    \begin{equation}\label{eq:boundonwI}
        w(I) \geq w(I_0) + w'(I') \geq w(I_0) + (1-\eps')\cdot\frac{w'(V\setminus V_0)}{A+1}.
    \end{equation}
    The independent set $I'$ is returned by the recursive call in Line 8. The second inequality therefore follows from the induction hypothesis. Recall that for each node $v\in V_0$, $\deg_0(v)$ denotes the degree of $v$ in $G[V_0]$. Because the degree of $v\in V_0$ in $G$ is at most $A$, the number of neighbors of $v$ in $V\setminus V_0$ is at most $A-\deg_0(v)$. We can therefore bound the total weight $w'$ of the nodes outside $V_0$ as
    \[
    w'(V\setminus V_0) \geq  w(V\setminus V_0) - \sum_{v\in I_0} \big(A -\deg_0(v)\big)\cdot w(v).
    \]
    In combination with \eqref{eq:boundonwI}, we therefore get
    \begin{eqnarray*}
        w(I) & \geq & w(I_0) + \frac{1-\eps'}{A+1}\cdot \left(w(V\setminus V_0) - \sum_{v\in I_0} \big(A -\deg_0(v)\big)\cdot w(v)\right)\\
        & = & \frac{1-\eps'}{A+1}\cdot w(V\setminus V_0) + \frac{1}{A+1}\cdot
        \sum_{v\in I_0} w(v)\cdot \left(
        1+\eps' A + (1-\eps')\deg_0(v)
        \right)\\
        & \stackrel{A\geq \deg_0(v)}{\geq} &
        \frac{1-\eps'}{A+1}\cdot w(V\setminus V_0) + \frac{1}{A+1}\cdot
        \sum_{v\in I_0} (\deg_0(v)+1)\cdot w(v)\\
        & = & \frac{1-\eps'}{A+1}\cdot w(V\setminus V_0) + \frac{1}{A+1}\cdot
        \sum_{v\in I_0} \overline{w}(v)\\
        & \geq &
        \frac{1-\eps'}{A+1}\cdot \left[
        w(V\setminus V_0) + \sum_{v\in I_0}\frac{\overline{w}(v)}{\deg_0(v)+1}
        \right]\\
        & = & \frac{1-\eps'}{A+1}\cdot\big(w(V\setminus V_0) + w(V_0)\big)
        \ =\ (1-\eps')\cdot \frac{w(V)}{A+1}.
    \end{eqnarray*}
    The last inequality follows from \Cref{thm:carowei} when applying the corresponding algorithm in Line 5 of \Cref{algo:2beta+1approxMWIS}. This concludes the proof.
\end{proof}

The claim of \Cref{thm:arboricity} now directly follows from combining \Cref{lemma:2beta_timecomplexity} and \Cref{lemma:approxRatio_arboricity} with parameter $\eps'=\eps/2$.


	
\bibliographystyle{alpha}
\bibliography{references.bib}

\newcommand{\etalchar}[1]{$^{#1}$}
\begin{thebibliography}{BNBYF{\etalchar{+}}01}

\bibitem[ABI86]{alon86}
N.~Alon, L.~Babai, and A.~Itai.
\newblock A fast and simple randomized parallel algorithm for the maximal
  independent set problem.
\newblock {\em Journal of Algorithms}, 7(4):567--583, 1986.

\bibitem[AKO18]{ahmadi18}
M.~Ahmadi, F.~Kuhn, and R.~Oshman.
\newblock Distributed approximate maximum matching in the {CONGEST} model.
\newblock In {\em Proc.\ 32nd Symp.\ on Distr.\ Computing (DISC)}, pages
  6:1--6:17, 2018.

\bibitem[BBKO22]{Balliu0KO22-hideandseek}
Alkida Balliu, Sebastian Brandt, Fabian Kuhn, and Dennis Olivetti.
\newblock Distributed {$\Delta$}-coloring plays hide-and-seek.
\newblock In {\em Proc.\ 54th {ACM} Symp.\ on Theory of Computing (STOC)},
  pages 464--477, 2022.

\bibitem[BCGS17]{Bar-YehudaCGS17}
Reuven Bar{-}Yehuda, Keren Censor{-}Hillel, Mohsen Ghaffari, and Gregory
  Schwartzman.
\newblock Distributed approximation of maximum independent set and maximum
  matching.
\newblock In {\em Proc.\ of 36th {ACM} Symp.\ on Principles of Distributed
  Computing (PODC)}, pages 165--174, 2017.

\bibitem[BE08]{barenboim08}
L.~Barenboim and M.~Elkin.
\newblock Sublogarithmic distributed mis algorithm for sparse graphs using
  {Nash-Williams} decomposition.
\newblock In {\em Proc.\ 27th ACM Symp.\ on Principles of Distributed Computing
  (PODC)}, pages 25--34, 2008.

\bibitem[BEK15]{BEK15}
L.~Barenboim, M.~Elkin, and F.~Kuhn.
\newblock Distributed {$(\Delta+1)$}-coloring in linear (in {$\Delta$}) time.
\newblock {\em SIAM J.~Computing}, 43(1):72--95, 2015.

\bibitem[BEPS16]{BEPS2016}
Leonid Barenboim, Michael Elkin, Seth Pettie, and Johannes Schneider.
\newblock The locality of distributed symmetry breaking.
\newblock {\em Journal of the ACM}, 63(3):1--45, 2016.

\bibitem[BHKK16]{podc16_BA}
M.~Bodlaender, M.~Halld\'{o}rsson, C.~Konrad, and F.~Kuhn.
\newblock Brief announcement: Local independent set approximation.
\newblock In {\em Proc.\ 35th ACM Symp.\ on Principles of Distributed Computing
  (PODC)}, pages 93--95, 2016.

\bibitem[BHR18]{BoppanaHR18}
Ravi~B. Boppana, Magn{\'{u}}s~M. Halld{\'{o}}rsson, and Dror Rawitz.
\newblock Simple and local independent set approximation.
\newblock In {\em Proc.\ 25th Int.\ Coll.\ on Structural Information and
  Communication Complexity (SIROCCO)}, pages 88--101, 2018.

\bibitem[BNBYF{\etalchar{+}}01]{localratio}
A.~Bar-Noy, R.~Bar-Yehuda, A.~Freund, J.~Naor, and B.~Schieber.
\newblock A unified approach to approximating resource allocation and
  scheduling.
\newblock {\em J.\ of the ACM (JACM)}, 48:1069--1090, 2001.

\bibitem[Car79]{caro79}
Y.~Caro.
\newblock New results on the independence number.
\newblock Technical report, Tel Aviv University, 1979.

\bibitem[CHW08]{CzygrinowHW08}
A.\ Czygrinow, M.\ Ha{\'{n}}{\'{c}}kowiak, and W.\ Wawrzyniak.
\newblock Fast distributed approximations in planar graphs.
\newblock In {\em DISC}, volume 5218 of {\em Lecture Notes in Computer
  Science}, pages 78--92. Springer, 2008.

\bibitem[CL23]{ChangL23}
Yi{-}Jun Chang and Zeyong Li.
\newblock The complexity of distributed approximation of packing and covering
  integer linear programs.
\newblock In {\em Proc.\ 42nd {ACM} Symp.\ on Principles of Distributed
  Computing (PODC)}, pages 32--43, 2023.

\bibitem[CV86]{cole86}
R.~Cole and U.~Vishkin.
\newblock Deterministic coin tossing with applications to optimal parallel list
  ranking.
\newblock {\em Information and Control}, 70(1):32--53, 1986.

\bibitem[FFK21]{FaourFK21}
Salwa Faour, Marc Fuchs, and Fabian Kuhn.
\newblock Distributed {CONGEST} approximation of weighted vertex covers and
  matchings.
\newblock In {\em Proc.\ 25th Int.\ Conf.\ on Principles of Distributed Systems
  (OPODIS)}, pages 17:1--17:20, 2021.

\bibitem[FGG{\etalchar{+}}25]{Localrounding23}
S.~Faour, M.~Ghaffari, C.~Grunau, F.~Kuhn, and V.~Rozho\v{n}.
\newblock Local distributed rounding: Generalized to mis, matching, set cover,
  and beyond.
\newblock {\em ACM Transactions on Algorithms}, 2025.

\bibitem[Fis20]{Fischer20}
Manuela Fischer.
\newblock Improved deterministic distributed matching via rounding.
\newblock {\em Distributed Comput.}, 33(3-4):279--291, 2020.

\bibitem[FK20]{opodis20_MVC}
S.\ Faour and F.\ Kuhn.
\newblock Approximating bipartite minimum vertex cover in the {CONGEST} model.
\newblock In {\em Proc.\ 24th Conf.\ on Principles of Distr.\ Systems
  (OPODIS)}, pages 29:1--29:16, 2020.

\bibitem[FMU22]{FischerMU22}
Manuela Fischer, Slobodan Mitrovic, and Jara Uitto.
\newblock Deterministic $(1+\epsilon)$-approximate maximum matching with
  $\mathrm{poly}(1/\epsilon)$ passes in the semi-streaming model and beyond.
\newblock In {\em Proc.\ 54th {ACM} Symp.\ on Theory of Computing (STOC)},
  pages 248--260, 2022.

\bibitem[GG24]{GhaffariGrunauFOCS24}
Mohsen Ghaffari and Christoph Grunau.
\newblock Near-optimal deterministic network decomposition and ruling set, and
  improved {MIS}.
\newblock In {\em Proc.\ 65th {IEEE} Symp.\ on Foundations of Computer Science
  (FOCS)}, pages 2148--2179, 2024.

\bibitem[Gha16]{ghaffari16_MIS}
M.~Ghaffari.
\newblock An improved distributed algorithm for maximal independent set.
\newblock In {\em Proc.\ 27th ACM-SIAM Symp.\ on Discrete Algorithms (SODA)},
  pages 270--277, 2016.

\bibitem[GHK18]{derandomization}
M.~Ghaffari, D.~G. Harris, and F.~Kuhn.
\newblock On derandomizing local distributed algorithms.
\newblock In {\em Proc.\ 59th Symp.\ on Foundations of Computer Science
  (FOCS)}, pages 662--673, 2018.

\bibitem[Gil24]{YuvalGil24}
Y.~Gil.
\newblock {Improved Deterministic Distributed Maximum Weight Independent Set
  Approximation in Sparse Graphs}.
\newblock In {\em 27th International Conference on Principles of Distributed
  Systems (OPODIS 2023)}, pages 16:1--16:20, 2024.

\bibitem[GKM17]{ghaffari2017complexity}
M.~Ghaffari, F.~Kuhn, and Y.~Maus.
\newblock On the complexity of local distributed graph problems.
\newblock In {\em Proc.\ 39th ACM Symp.\ on Theory of Computing (STOC)}, pages
  784--797, 2017.

\bibitem[H{\aa}s96]{Hastad96}
J.~H{\aa}stad.
\newblock Clique is hard to approximate within $n^{1-\epsilon}$.
\newblock In {\em Proc.\ 37th Symp.\ on Foundations of Computer Science
  (FOCS)}, pages 627--636, 1996.

\bibitem[HK18]{HalldorssonKonrad18}
Magn{\'{u}}s~M. Halld{\'{o}}rsson and Christian Konrad.
\newblock Computing large independent sets in a single round.
\newblock {\em Distributed Computing}, 31:69--82, 2018.

\bibitem[Hoc83]{Hochbaum83}
Dorit~S. Hochbaum.
\newblock Efficient bounds for the stable set, vertex cover and set packing
  problems.
\newblock {\em Discrete Applied Mathematics}, 6, 1983.

\bibitem[KKSS20]{KawarabayashiKS20}
K.~Kawarabayashi, S.~Khoury, A.~Schild, and G.~Schwartzman.
\newblock Improved distributed approximations for maximum independent set.
\newblock In {\em Proc.\ 34th Symp.\ on Distributed Computing (DISC)}, pages
  35:1--35:16, 2020.

\bibitem[KMW16]{kuhn16_jacm}
F.~Kuhn, T.~Moscibroda, and R.~Wattenhofer.
\newblock Local computation: Lower and upper bounds.
\newblock {\em J.~of the ACM}, 63(2), 2016.

\bibitem[Lin92]{linial92}
N.~Linial.
\newblock Locality in distributed graph algorithms.
\newblock {\em SIAM Journal on Computing}, 21(1):193--201, 1992.

\bibitem[LPP15]{lotker15}
Z.~Lotker, B.~Patt{-}Shamir, and S.~Pettie.
\newblock Improved distributed approximate matching.
\newblock {\em J.\ ACM}, 62(5):38:1--38:17, 2015.

\bibitem[LS93]{linial93}
N.~Linial and M.~Saks.
\newblock Low diameter graph decompositions.
\newblock {\em Combinatorica}, 13(4):441--454, 1993.

\bibitem[Lub86]{luby86}
M.~Luby.
\newblock A simple parallel algorithm for the maximal independent set problem.
\newblock {\em SIAM Journal on Computing}, 15:1036--1053, 1986.

\bibitem[LW08]{lenzen08}
C.~Lenzen and R.~Wattenhofer.
\newblock Leveraging linial's locality limit.
\newblock In {\em Proceedings of 22nd Symp.\ on Distributed Computing (DISC)},
  pages 394--407, 2008.

\bibitem[MPX13]{MPX13}
G.~L. Miller, R.~Peng, and S.~C. Xu.
\newblock Parallel graph decompositions using random shifts.
\newblock In {\em Proc.\ 25th {ACM} Symp.\ on Parallelism in Alg.\ and Arch.\
  (SPAA)}, pages 196--203, 2013.

\bibitem[Pel00]{peleg00}
D.~Peleg.
\newblock {\em Distributed Computing: A Locality-Sensitive Approach}.
\newblock SIAM, 2000.

\bibitem[RG20]{polylogdecomp}
V.~Rozho\v{n} and M.~Ghaffari.
\newblock Polylogarithmic-time deterministic network decomposition and
  distributed derandomization.
\newblock In {\em Proc.\ 52nd {ACM} Symp.\ on Theory of Computing (STOC)},
  pages 350--363, 2020.

\bibitem[SW08]{schneider08}
J.~Schneider and R.~Wattenhofer.
\newblock A log-star distributed maximal independent set algorithm for
  growth-bounded graphs.
\newblock In {\em Proc.\ 27th ACM Symp.\ on Principles of Distributed Computing
  (PODC)}, pages 35--44, 2008.

\bibitem[Tur41]{turan41}
P.~Tur\'an.
\newblock On an extremal problem in graph theory (in {H}ungarian).
\newblock {\em Mat. Fiz. Lapok}, 48:436--452, 1941.

\bibitem[Wei81]{Wei1981generalizedturan}
V.~K. Wei.
\newblock A lower bound on the stability number of a simple graph.
\newblock Technical Report No. 81-11217-9, Bell Laboratories Technical
  Memorandum, 1981.

\end{thebibliography}

\end{document}